  \newcommand{\floor}[1]{\left\lfloor #1 \right\rfloor}
  \newcommand{\ceil}[1]{\left\lceil #1 \right\rceil}
  \newcommand{\Oh}{\mathcal{O}}
  \newcommand{\polylog}{\mathrm{polylog\,}}
  \renewcommand{\S}{\mathbf{S}}
  \newcommand{\D}{\mathbf{D}}
  \newcommand{\Find}{\mathrm{Find}}
  \newcommand{\Union}{\mathrm{Union}}
  \newcommand{\LCE}{\mathrm{LCE}}
  \newcommand{\ShortLCE}{\mathrm{ShortLCE}}
  \newcommand{\SpecLCE}{\mathrm{SparseShortLCE}}
  \newcommand{\CoarseLCE}{\mathrm{CoarseLCE}}
  \newcommand{\code}{\mathrm{code}}
  \newcommand{\rank}{\mathrm{rank}}
  \newcommand{\unioncount}{\mathsf{\#union}}
  \theoremstyle{plain}
  \newtheorem{theorem}{Theorem}[section]
  \newtheorem{lemma}[theorem]{Lemma}  
  \newtheorem{corollary}[theorem]{Corollary}
  \crefname{figure}{Figure}{Figures}
  \theoremstyle{definition}
  \title{Faster Longest Common Extension Queries\\ in Strings over General Alphabets}
\author{Paweł Gawrychowski\footnote{Work done while the author held a post-doctoral position at Warsaw Center of Mathematics and Computer Science.}}
\author{Tomasz Kociumaka\footnote{Supported by Polish budget funds for science in 2013-2017 as a research project under the `Diamond Grant' program.}}
\author{Wojciech Rytter\footnote{Supported by the grant NCN2014/13/B/ST6/00770 of the Polish Science Center.}}
\author{Tomasz Waleń$^\ddagger$}
\affil{Institute of Informatics, University of Warsaw\\
%    ul. Stefana Banacha 2, 02-097 Warsaw, Poland\\
    \texttt{[gawry,kociumaka,rytter,walen]@mimuw.edu.pl}}
\date{}
\begin{document}

\maketitle
  
\begin{abstract}
Longest common extension queries (often called longest common prefix queries) constitute
a~fundamental building block in multiple string algorithms, for example computing runs
and approximate pattern matching. 
We show that a sequence of $q$ LCE queries for
a~string of size $n$ over a general ordered alphabet can be realized in $\Oh(q \log \log n+n\log^*n)$ time making only $\Oh(q+n)$ symbol comparisons. 
Consequently, all runs in a string over a general
ordered alphabet can be computed in $\Oh(n \log \log n)$ time making $\Oh(n)$ symbol
comparisons. Our results improve upon a solution by Kosolobov (Information Processing Letters, 2016),
who gave an algorithm with $\Oh(n \log^{2/3} n)$ running time and conjectured that $\Oh(n)$ time is possible. 
We make a significant progress towards resolving this conjecture.
Our techniques extend to the case of general unordered alphabets, when the time increases to
$\Oh(q\log n + n\log^*n)$. The main tools are difference covers and the disjoint-sets data structure.
\end{abstract}

\section{Introduction}\label{sec:intro}
While many text algorithms are designed under the assumption of integer alphabet sortable in linear
time, in some cases it is enough to assume general alphabet. A general alphabet can be either
ordered, meaning that one can check if one symbol is less than another, or unordered, meaning
that only equality of two symbols can be checked.
Many classical linear-time string-matching
algorithms (e.g. Knuth-Morris-Pratt, Boyer-Moore) work for any unordered general alphabet.
Recently, a linear-time algorithm for computing the leftmost critical factorization in such model was
given~\cite{DBLP:journals/corr/Kosolobov15b}.
On the other hand, algorithms related to detecting repetitions usually need $\Omega(n\log n)$
equality tests~\cite{DBLP:journals/jal/MainL84}, and an on-line algorithm matching this bound
is known~\cite{DBLP:conf/cpm/Kosolobov15}.

In this paper we consider the longest common extension problem ($\LCE$, in short) in case of
general ordered and unordered alphabets.
The goal is to preprocess a given word $w$ of length $n$ for queries $\LCE(i,j)$
returning the length of the longest common factor starting at position $i$ and $j$
in $w$. Such queries are often a basic building block in more complicated algorithms,
for example in computing runs~\cite{B14, B14bis}
as well as in approximate string matching~\cite{DBLP:journals/jal/LandauV89}.

For integer alphabets of polynomial size, one can preprocess a given string in linear time and space to answer any $\LCE$ query in constant time.
Preprocessing space can be traded for query time~\cite{DBLP:conf/cpm/BilleGKLV15,DBLP:journals/jda/BilleGSV14} and
generalizations to trees~\cite{DBLP:conf/cpm/BilleGGLW15} and grammar-compressed
strings~\cite{DBLP:conf/stringology/Inenaga15,DBLP:journals/njc/KarpinskiRS97,DBLP:conf/cpm/Lifshits07,DBLP:conf/cpm/MiyazakiST97} are known.
The situation is more complicated for general alphabets. If the alphabet is ordered,
then of course we can reduce it to $[1..n]$ by sorting the
characters in $\Oh(n\log n)$ time and preprocess the obtained string in linear time and
space to answer any $\LCE$ query in constant time. However this increases the
total preprocessing time to $\Oh(n\log n)$. For unordered alphabet the situation is even
worse, because the reduction would take $\Oh(n^2)$ time.
A natural question is hence how efficiently we can answer a collection of such queries given one by one (on-line),
where we measure the preprocessing time plus the total time taken by all the queries.

It is known that if we can perform on-line $\Oh(n)$ $\LCE$ queries for
a given word of length $n$ in total time $T(n)$ making $\Oh(n)$ symbol comparisons,
then we can compute all runs in $\Oh(n+T(n))$ time making only $\Oh(n)$ symbol comparisons. 
An algorithm with $T(n)=\Oh(n\log^{2/3}n)$ time was recently presented by Kosolobov~\cite{Kosolobov2016241}, 
who posed the existence of a linear-time algorithm as an open question. 
Much earlier, Breslauer~\cite{BreslauerPHD} asked in his PhD thesis whether an easier task of square detection (equivalently,
checking if a word has at least one run) is possible in linear time in the comparison model.
In this paper we make a significant progress towards answering
both questions by giving a faster algorithm with $T(n) = \Oh(n\log\log n)$.

\subparagraph*{Our result} For a given string of length $n$ over a general ordered alphabet, we
can answer on-line a~sequence of $q$ LCE queries
in $\Oh(q\log\log n+n\log^*n)$ time making $\Oh(q+n)$ symbol comparisons.
In particular, a sequence of $\Oh(n)$ queries can be answered in $\Oh(n\log\log n)$ time.
Consequently, all runs in a string over a general ordered alphabet can be computed
in $\Oh(n \log\log n)$ time making $\Oh(n)$ symbol comparisons. 
For a general unordered alphabet we answer $q$ LCE queries in $\Oh(q\log n + n\log^* n)$ time,
still making $\Oh(q+n)$ symbol comparisons.

\subparagraph*{Overview of the methods} At a very high level, our approach is similar
to the one used by Kosolobov. We first show how to calculate $\min(\LCE(i,j),t)$ efficiently,
where $t=\polylog n$. Then we use a difference cover to sample some positions in the text.
Using ``short'' queries, we can efficiently construct a sparse suffix array for these sampled
positions, which in turn allows us to calculate an arbitrary $\LCE(i,j)$ efficiently. 
The key difference is that instead of calculating $\min(\LCE(i,j),t)$ naively, we use a recursive approach. 
The main tool there is an efficient Union-Find structure. 
This is enough to answer $\Oh(n)$ short queries in $\Oh(n\log\log n \cdot \alpha(n\log\log n, n\log\log n))$ total time. 
We can remove the $\alpha(n\log\log n, n\log\log n)$ factor introducing another difference cover and carefully analyzing the running
time of the Union-Find data structure.
Finally, we modify the algorithm to work faster when the number of queries $q$ is smaller than $n$. The main insight
allowing us to obtain $\Oh(q\log\log n+n\log^*n)$ total time is introducing multiple
levels of difference covers with some additional properties. 
Such family of difference covers was implicitly provided in~\cite{ourWADS}.

%**************************************************

\section{Preliminaries}\label{sec:prelim}
\subsection{$t$-covers}
A difference cover is a number-theoretic tool used throughout the paper.
A set $\D\subseteq [0..t-1]$ is said to be a $t$-{\em difference-cover} if
$[0..t-1]\;=\; \{\, (x-y) \bmod t\;:\; x,y\in \D\,\}.$

\begin{lemma}[Maekawa~\cite{Maekawa}]\label{lem:difference-cover} For every integer $t$ there is
$t$-difference-cover of size $\Oh(\sqrt{t})$, which can be constructed in
$\Oh(\sqrt{t})$ time.
\end{lemma}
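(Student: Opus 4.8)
The plan is to exhibit an explicit construction built from two arithmetic progressions. First I would put $r=\ceil{\sqrt t}$, so that $r^2\ge t$, and consider
\[
  \D_0 \;=\; \{0,1,\dots,r-1\}\ \cup\ \{0,r,2r,\dots,r\cdot r\},
\]
which clearly has $\Oh(\sqrt t)$ elements (at most $2r+1$ of them). The heart of the argument is the claim that the difference set $\D_0-\D_0$ already contains the whole interval $\{0,1,\dots,r^2\}$. To see this, fix $k$ in that interval and write $k=ir-a$ with $i=\ceil{k/r}$ and $a=ir-k$; then $i\in\{0,\dots,r\}$, so $ir$ belongs to the second progression, while the defining inequality $(i-1)r<k\le ir$ gives $a\in\{0,\dots,r-1\}$, so $a$ belongs to the first progression, and hence $k=ir-a\in\D_0-\D_0$.

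Since $r^2\ge t$, this yields $[0..t-1]\subseteq \D_0-\D_0$, so every residue $k\in[0..t-1]$ can be written as $(x-y)\bmod t$ with $x,y\in\D_0$ (in fact with $x-y=k$ exactly). To obtain a set that actually lies inside $[0..t-1]$, as the definition of a $t$-difference-cover demands, I would finally reduce everything modulo $t$: set $\D=\{\,d\bmod t : d\in\D_0\,\}$. Reducing modulo $t$ changes no pairwise difference modulo $t$, so $\D$ is still a $t$-difference-cover, and $|\D|\le|\D_0|=\Oh(\sqrt t)$. The construction itself merely enumerates the two progressions and reduces each element modulo $t$, which takes $\Oh(\sqrt t)$ time.

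I do not anticipate a genuine obstacle here: the only points needing care are verifying that the two-progression difference set sweeps out all of $\{0,\dots,r^2\}$ (the one-line computation above, which is the "technical" step) and not forgetting to bring the elements back into the range $[0..t-1]$ without destroying the covering property. The constant can be improved by more careful number-theoretic constructions (Colbourn--Ling-type difference covers reach roughly $\sqrt{1.5\,t}$), but for the $\Oh(\sqrt t)$ bound stated in the lemma the naive two-progression set is already sufficient.
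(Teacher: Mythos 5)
The paper states this lemma with a citation to Maekawa and does not reprove it, so there is no in-paper proof to compare against. Your construction is correct: writing $r=\ceil{\sqrt t}$, any $k\in\{0,\dots,r^2\}$ decomposes as $k=ir-a$ with $i=\ceil{k/r}\in\{0,\dots,r\}$ and $a=ir-k\in\{0,\dots,r-1\}$, so the two arithmetic progressions together already realize every difference in $[0..r^2]\supseteq[0..t-1]$; you also correctly notice that $\D_0$ is not yet a subset of $[0..t-1]$ (it contains $r^2\ge t$) and that reducing each element modulo $t$ preserves all pairwise differences modulo $t$, which is exactly what the definition requires. The bound $|\D|\le 2r+1=\Oh(\sqrt t)$ and the $\Oh(\sqrt t)$ construction time are immediate. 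The only minor remark is that your approach is the folklore two-progression construction, which gives roughly $2\sqrt t$ elements; Maekawa's cited construction (and the Colbourn--Ling variants you mention) obtain a constant closer to $1$ via more delicate number-theoretic or design-theoretic arguments, but since the lemma only asks for $\Oh(\sqrt t)$, your elementary argument suffices and is the right thing to write if one wants a self-contained proof.
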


A subset $X$ of $[1..n]$ is $t$-\emph{periodic} if for each $i\in [1..n-t]$ we have: 
$i\in X\,\Leftrightarrow\, i+t\in X$.

A set $\S\subseteq [1..n]$ is called a $t$-\emph{cover} of $[1..n]$
if $\S$ is $t$-periodic and there is a constant-time computable function $h$ such that for $1\le i,j\le n-t$ we have 
$0\le h(i,j)\le t$ and $i+h(i,j), j+h(i,j)\in \S(t)$ (see \cref{fig:diff_cover_example}).

A $t$-cover can be obtained by taking a $t$-difference-cover $\D$ and
setting $\S(t) = \{i\in [1..n] : i \bmod t \in \D\}$. This is a well-known construction
implicitly used in~\cite{BurkhardtEtAl2003}, for example.

\begin{lemma}
For each $t\le n$ there is a $t$-cover $\S(t)$ of size  $\Oh(\frac{n}{\sqrt{t}})$ which can be constructed in $\Oh(\frac{n}{\sqrt{t}})$ time.
\end{lemma}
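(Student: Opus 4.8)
The plan is to carry out the construction already sketched above and then verify the three defining requirements of a $t$-cover together with the quantitative bounds. Let $\D\subseteq[0..t-1]$ be the $t$-difference-cover of size $\Oh(\sqrt t)$ produced in $\Oh(\sqrt t)$ time by \cref{lem:difference-cover}, and set
\[
  \S(t)\;=\;\{\,i\in[1..n]\;:\;i\bmod t\in\D\,\}.
\]
It then remains to check: (i) $\S(t)$ is $t$-periodic; (ii) there is a constant-time computable $h$ with $0\le h(i,j)\le t$ and $i+h(i,j),\,j+h(i,j)\in\S(t)$ for all $1\le i,j\le n-t$; and (iii) $|\S(t)|=\Oh(n/\sqrt t)$, with the set computable in the same time.

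First I would dispose of the easy parts. Periodicity is immediate, since membership of $i$ in $\S(t)$ depends only on $i\bmod t$, so $i\in\S(t)\Leftrightarrow i+t\in\S(t)$ whenever $i\in[1..n-t]$. For the size, the residue class of any fixed $d\in[0..t-1]$ meets $[1..n]$ in at most $\lceil n/t\rceil$ points, hence $|\S(t)|\le|\D|\cdot\lceil n/t\rceil=\Oh(\sqrt t\,(n/t+1))=\Oh(n/\sqrt t+\sqrt t)=\Oh(n/\sqrt t)$, the last step using $t\le n$. The set itself is produced by enumerating, for each $d\in\D$, the progression $\{d,d+t,d+2t,\dots\}\cap[1..n]$ (with $d=0$ read as $\{t,2t,\dots\}$), which costs $\Oh(|\S(t)|)=\Oh(n/\sqrt t)$ on top of the $\Oh(\sqrt t)=\Oh(n/\sqrt t)$ spent on $\D$.

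Next I would treat the function $h$, which is the only part carrying content. Fix $1\le i,j\le n-t$ and let $d=(i-j)\bmod t\in[0..t-1]$. The difference-cover property gives $x,y\in\D$ with $(x-y)\bmod t=d$; define $h(i,j)$ as the unique element of $[0..t-1]$ congruent to $y-j$ modulo $t$, so $0\le h(i,j)\le t$. Modulo $t$ we then have $j+h(i,j)\equiv y$ and $i+h(i,j)\equiv(i-j)+y\equiv(x-y)+y\equiv x$; since $x,y\in\D\subseteq[0..t-1]$ this means $(i+h(i,j))\bmod t=x\in\D$ and $(j+h(i,j))\bmod t=y\in\D$, while $1\le i+h(i,j),\,j+h(i,j)\le(n-t)+(t-1)<n$, so both positions lie in $\S(t)$. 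For $\Oh(1)$ evaluation the only nontrivial ingredient is producing, in constant time, a witness pair $(x,y)\in\D^2$ realizing a prescribed residue $d$; this is where one invokes the explicit structure of the difference cover behind \cref{lem:difference-cover} (e.g.\ Maekawa's construction, whose cover is one row together with one column of a $\lceil\sqrt t\rceil\times\lceil\sqrt t\rceil$ grid laid over $[0..t-1]$), from which a witness is given by a fixed-size arithmetic expression in $d$ and $t$; composing it with the definition above yields a constant-size formula for $h$.

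I expect the one genuine obstacle to be exactly this last point. For an unstructured difference cover one could only find a witness by $\Oh(|\D|)=\Oh(\sqrt t)$ search or via an $\Oh(t)$-size precomputed table, and the table breaks the $\Oh(n/\sqrt t)$ budget once $t=\omega(n^{2/3})$ (it is, however, harmless in the regime $t=\polylog n$ used for LCE). Pinning the exact floor/ceiling arithmetic in the witness formula so that $x,y$ always stay in $[0..t-1]$ in the boundary cases is fiddly but routine; everything else is elementary counting.
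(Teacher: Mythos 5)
Your proof is correct and takes exactly the approach the paper indicates: it instantiates the construction $\S(t)=\{i\in[1..n]:i\bmod t\in\D\}$ stated immediately before the lemma (the paper itself gives no explicit proof, citing it as well-known) and verifies $t$-periodicity, the size and construction-time bounds, and the existence of the witness function $h$. Your remark that constant-time evaluation of $h$ genuinely needs the explicit arithmetic structure of Maekawa's cover (rather than a linear scan over $\D$ or an $\Oh(t)$-size table, the latter failing for $t=\omega(n^{2/3})$) is the right point to flag, and the rest of the argument is sound.
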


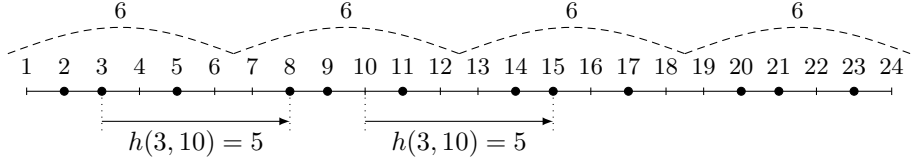
\begin{figure}[t]

\begin{center}
\begin{tikzpicture}[scale=0.5]
\tikzstyle{s} = [draw, circle, fill=black, minimum size = 3pt, inner sep = 0 pt, color=black]
\draw (1,0)--(24,0);
\foreach \i in {.5,6.5,...,18.5} {
	\draw[thin,densely dashed] (\i, 1) sin (\i+3, 1.7) node[above]{\small 6} cos (\i+6,1);
}
 
\foreach \i in {1,...,24} {
  \draw (\i,-0.1)--+(0,0.2);
  \node at (\i, 0.2) [above] {\small \i};
}
\foreach \i in {2,3,5,8,9,11,14,15,17,20,21,23} {
  \node [s] at (\i,0) {};
}

\foreach \i/\l in {3/5, 10/5} {
  \draw [dotted] (\i,0)--+(0,-1.1);
  \draw [dotted] (\i+\l,0)--+(0,-1.1);
  \draw [-latex] (\i,-.8)--node[midway,below] {$h(3,10)=5$} +(\l,0);
}
\end{tikzpicture}
\end{center}

\caption{An example of a 6-cover $\S(6)=\{2,3,5, 8,9,11, 14,15,19, 20,21,23\}$ (for $\D=\{2,3,5\}$), with the
elements marked as black circles. For example, we have $h(3,10)=5$, since $3+5,\,10+5\in \S(6)$.}
\label{fig:diff_cover_example}
\end{figure}

\subsection{Disjoint-sets structure}

Our another tool is a disjoint-sets data structure.
In this problem we maintain a family of disjoint subsets of $[1..n]$, initially consisting of
singleton sets. We perform $\Find$ queries asking for a subset containing a given element,
and $\Union$ operations which merge two subsets.

Note that the extremely fast-growing Ackermann function~\cite{DBLP:journals/jacm/TarjanL84} is defined for $i,j \in \mathbb{Z}_{> 0}$ as
$$A(i,j)=\begin{cases}2^j & \text{if }i=1,\\A(i-1,2) &\text{if }i>1\text{ and }j=1,\\A(i-1, A(i,j-1))&\text{if }i>1\text{ and }j>1.\end{cases}$$
Moreover, for $n,m\in \mathbb{Z}_{>0}$ ($m\ge n$) one defines $\alpha(m,n) = \min \{ i \ge 1 : A(i, \floor{\tfrac{m}{n}}) > \log n\}$.

\begin{lemma}[Tarjan~\cite{UnionFind1975}]\label{lem:uf}
A sequence of up to $n$ $\Union$ and $m$ $\Find$ operations
on an $n$-element set can be executed on-line in $\Oh(n+m\cdot \alpha(m+n,n))$ total time.
\end{lemma}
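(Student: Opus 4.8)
The plan is to implement the structure as a forest with \emph{union by rank} and \emph{path compression}, and then carry out Tarjan's amortized analysis. Each set is a rooted tree whose root is its representative; $\Union$ hangs the root of smaller rank below the root of larger rank (incrementing the surviving root's rank on a tie); and $\Find(x)$ walks parent pointers from $x$ up to the root $r$ and then resets $\mathrm{parent}(y):=r$ for every node $y$ on the walk. With this implementation a $\Union$ costs $\Oh(1)$ beyond its at most two $\Find$s, and a $\Find$ costs $\Oh(1)$ plus $\Oh(1)$ per parent pointer it follows, so the claim reduces to bounding the total number of parent-pointer traversals performed by all $\Find$s by $\Oh(n+m\,\alpha(m+n,n))$.

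First I would record the standard rank invariants, all preserved by path compression because it never alters a rank and only re-points a node to a new parent of strictly larger rank: along every root-ward path ranks strictly increase; once a node stops being a root its rank is frozen, while the ranks of its successive parents strictly increase over time; and at most $n/2^r$ elements ever attain rank $r$, so no rank exceeds $\floor{\log n}$.

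For the amortized count I would group the rank values $0,1,\dots,\floor{\log n}$ into $\Oh(\alpha(m+n,n))$ consecutive \emph{blocks} whose boundaries are the Ackermann values $A(k,\floor{\tfrac{m+n}{n}})$ for $k\ge 1$ — there are at most $\alpha(m+n,n)$ of them, since by definition $A(\alpha(m+n,n),\floor{\tfrac{m+n}{n}})>\log n$. When a $\Find$ traverses an edge $(x,\mathrm{parent}(x))$ on its path, I would charge the traversal to the $\Find$ itself if $x$ is the root, or the node directly below the root, or if $\rank(x)$ and $\rank(\mathrm{parent}(x))$ fall in different blocks, and otherwise charge it to the node $x$. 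A single $\Find$ then receives only $\Oh(1)$ charges per block (plus $\Oh(1)$), i.e.\ $\Oh(\alpha(m+n,n))$ in all, so the traversals charged to $\Find$s total $\Oh(m\,\alpha(m+n,n))$. For the traversals charged to nodes, each one is followed by path compression that strictly raises $\rank(\mathrm{parent}(x))$ while keeping it in $x$'s block, and once $\rank(\mathrm{parent}(x))$ escapes that block every later traversal out of $x$ is inter-block and is charged to a $\Find$; combining this with the geometric scarcity of high-rank elements ($\le n/2^{r-1}$ of rank $\ge r$) and the fact that the block boundaries are calibrated by $\floor{\tfrac{m+n}{n}}$, Tarjan's amortized accounting shows these node charges sum to $\Oh(n+m)$. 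Together with the $\Oh(n)$ cost of the $\Union$s and the $\Oh(m)$ baseline of the $\Find$s this gives $\Oh(n+m\,\alpha(m+n,n))$.

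I expect the real obstacle to be exactly this last bound on the per-node charges. The block boundaries must grow as the Ackermann values scaled by $\floor{\tfrac{m+n}{n}}$ — scaling merely by a constant would only yield the weaker $\alpha(n)$ with an extra slack — so that one can afford only $\alpha(m+n,n)$ blocks yet still have the few high-rank elements absorb their (large) per-block charge; balancing these opposing effects, and separately peeling off the lowest ranks (and the root together with the node below it on each $\Find$ path) so that the ``different block'' test is always meaningful, is the delicate heart of the argument. Checking that path compression preserves all the rank invariants — in particular that it never installs a parent of equal or smaller rank — is routine but underlies everything above.
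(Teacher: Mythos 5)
The paper does not prove this lemma: it cites it as an external result from Tarjan's 1975 JACM paper, so there is no in-paper argument to compare against. Your sketch correctly sets up the standard implementation (union by rank, path compression) and the rank invariants, and the overall template — partition ranks into $\Oh(\alpha(m+n,n))$ blocks and charge each traversal either to the $\Find$ (root, child-of-root, or inter-block) or to the lower endpoint $x$ (intra-block) — is in the right spirit.

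However, the step you yourself flag as ``the delicate heart'' is a genuine gap, and the single-level block scheme you describe does not close. With blocks $B_k = [A(k,j), A(k+1,j))$ for $j=\floor{(m+n)/n}$, the natural bound on self-charges inside $B_k$ is roughly $\sum_{r\in B_k}(n/2^r)\cdot|B_k| \approx 2n\cdot A(k+1,j)/2^{A(k,j)}$, and already for $k=1,\ j=4$ this ratio is $2^{65520}$. The same failure shows in the regime where $m$ is just large enough that $\alpha(m+n,n)=1$: then every rank lies in the single block, there are no inter-block traversals to charge to $\Find$s, and the self-charges alone sum to about $\sum_r (n/2^r)(\log n - r) = \Theta(n\log n)$, which vastly exceeds the target $\Oh(n+m)$ whenever $m = o(n\log n)$. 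The actual Tarjan/Kozen-style accounting needs one more dimension: a non-root $x$ carries both a $\mathrm{level}(x)$ (which block its parent's rank would fall in, relative to $\rank(x)$) and an $\mathrm{iter}(x)$; every self-charge strictly increases $\mathrm{iter}(x)$, at most $\rank(x)$ iterations exhaust a level and bump $\mathrm{level}(x)$, and there are $\Oh(\alpha)$ levels, so a node absorbs only $\Oh(\alpha\cdot\rank(x))$ self-charges. Since $\sum_r r\cdot n/2^r = \Oh(n)$, the total self-charge is controlled, and a slight refinement (the one that is actually ``calibrated'' by $\floor{(m+n)/n}$) trims this to the stated $\Oh(n + m\,\alpha(m+n,n))$. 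That iteration counter — absent from your proposal — is precisely what makes the books balance.
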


\begin{lemma}\label{lem:ackermann}
For every $n,m\in \mathbb{Z}_{>0}$, we have $n+m\cdot \alpha(m+n,n)=\Oh(m+n\log^* n)$.
\end{lemma}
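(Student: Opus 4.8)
The plan is to tame the inverse-Ackermann term by comparing $A$ with the tower function $\mathrm{tw}$ defined by $\mathrm{tw}(0)=1$ and $\mathrm{tw}(k)=2^{\mathrm{tw}(k-1)}$, for which $\log^* n=\min\{k\ge 0:\mathrm{tw}(k)\ge n\}$ (so $\mathrm{tw}(\log^* n)\ge n$) and $\mathrm{tw}(k)\ge 2^k$ for $k\ge 1$ (all logarithms are base $2$). The core step is the quantitative lower bound
\[
  A(s,r)\ \ge\ \mathrm{tw}\!\left(\left\lceil\tfrac{sr}{2}\right\rceil\right)\qquad\text{for all integers }s\ge 2,\ r\ge 1,
\]
which I would prove by a double induction. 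The base case $s=2$ reduces to the identity $A(2,r)=\mathrm{tw}(r+1)$ (hence $\ge\mathrm{tw}(r)$), itself proved by induction on $r$ from $A(1,j)=2^j$ and $A(2,r)=A(1,A(2,r-1))$. For $s\ge 3$ I would induct on $r$: the case $r=1$ uses $A(s,1)=A(s-1,2)$ and $s-1\ge\lceil s/2\rceil$; for $r\ge 2$, writing $x:=A(s,r-1)$, the recursion gives $A(s,r)=A(s-1,x)\ge\mathrm{tw}(\lceil(s-1)x/2\rceil)$ by the hypothesis at $s-1$, so it suffices to check $(s-1)x\ge sr$. This follows because the hypothesis at $(s,r-1)$ gives $x\ge\mathrm{tw}(\lceil s(r-1)/2\rceil)\ge\mathrm{tw}(r)$ (using $s\ge 3$), while $\mathrm{tw}(r)\ge 2^r\ge\tfrac{3}{2}r\ge\tfrac{s}{s-1}r$. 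I expect the real difficulty to lie in finding this quantitative form: one needs a \emph{product}-type bound $A(s,r)\gtrsim\mathrm{tw}(\Theta(sr))$ (a sum-type bound $\mathrm{tw}(\Theta(s+r))$ would not suffice later), and the constant $\tfrac12$ in the ceiling is essentially forced by the case $s=2$; once the statement is pinned down, the induction is routine and uses nothing about $A$ beyond its defining recursion.

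Two consequences follow. First, since $A(i,r)\ge\mathrm{tw}(\lceil ir/2\rceil)\ge\mathrm{tw}(\lceil i/2\rceil)$ for all $i\ge 2$ and $r\ge 1$, taking $i=2\log^* n+2$ gives $\mathrm{tw}(\lceil i/2\rceil)=\mathrm{tw}(\log^* n+1)=2^{\mathrm{tw}(\log^* n)}\ge 2^{n}>\log n$, so $\alpha(m+n,n)\le 2\log^* n+2$ for all $m,n\ge 1$. Second, if $A(s,r)\le\log n$ for some $s\ge 2$, then $\mathrm{tw}(\lceil sr/2\rceil)\le\log n$ by the core bound, whence $\lceil sr/2\rceil\le\log^* n$ (otherwise $\lceil sr/2\rceil\ge\log^* n+1$ and $\mathrm{tw}(\lceil sr/2\rceil)\ge\mathrm{tw}(\log^* n+1)>\log n$), i.e.\ $sr\le 2\log^* n$.

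To conclude, set $r=\floor{(m+n)/n}$ and $\alpha=\alpha(m+n,n)=\min\{i\ge 1:A(i,r)>\log n\}$, so $(r-1)n\le m<rn$. If $r=1$ then $m<n$, and $m\alpha\le m(2\log^* n+2)=\Oh(n\log^* n)$ by the first consequence. If $r\ge 2$ and $\alpha\le 2$ then $m\alpha\le 2m$. If $r\ge 2$ and $\alpha\ge 3$ then $A(\alpha-1,r)\le\log n$ with $\alpha-1\ge 2$, so $(\alpha-1)r\le 2\log^* n$ by the second consequence, giving $\alpha r\le r+2\log^* n$; combined with $m<rn$ and $rn\le 2m$ (which holds since $m\ge(r-1)n\ge rn/2$ for $r\ge 2$), this yields $m\alpha<n\cdot r\alpha\le n(r+2\log^* n)\le 2m+2n\log^* n$. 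Hence $m\cdot\alpha(m+n,n)=\Oh(m+n\log^* n)$ in every case, and since $n=\Oh(m+n\log^* n)$ as well, the lemma follows.
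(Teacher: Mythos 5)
Your proof is correct, and it takes a genuinely different route from the paper's. The paper works entirely through the row inverse $a(i,n)=\min\{j\ge 1:A(i,j)>\log n\}$: it unrolls the Ackermann recursion a few steps to prove the two specific inequalities $\alpha(n,n)\le 4+a(3,n)$ and $a(3,n)\le 2+\ceil{\log a(2,n)}$, and then splits on whether $m+n\ge n\cdot a(3,n)$, landing on $\Oh(n\log^2 a(2,n))=\Oh(n\cdot a(2,n))=\Oh(n\log^* n)$ in the small-$m$ case. You instead establish the uniform product-type lower bound $A(s,r)\ge\mathrm{tw}(\lceil sr/2\rceil)$ by a double induction, and extract from it two consequences tailored to the task: a universal bound $\alpha(m+n,n)\le 2\log^* n+2$ (used when $r=1$, i.e.\ $m<n$), and the ``inverse'' statement $(\alpha-1)r\le 2\log^* n$ when $A(\alpha-1,r)\le\log n$ (used when $r\ge 2$), which after multiplying by $n$ and using $rn\le 2m$ gives $m\alpha=\Oh(m+n\log^* n)$. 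Your approach is more self-contained — it never needs to introduce $a(i,n)$ or quote its behavior — and the single clean lower bound on $A$ makes the quantitative structure transparent and reusable; the cost is the slightly longer double induction. The paper's argument is shorter once the two ad hoc inequalities are accepted, but those inequalities are exactly the kind of thing your product bound packages uniformly, and your observation that a sum-type bound $\mathrm{tw}(\Theta(s+r))$ would not suffice for the second consequence is precisely why the paper also has to squeeze a product (via $a(3,n)\cdot(4+a(3,n))$) out of its two claims.
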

\begin{proof}
First, observe that the Ackermann function $A(i,j)$ is monotone with respect to both coordinates and that $A(i,j)\ge 2^{i+j-1}$. These properties are easy to show by induction.
Additionally, let us recall the row inverse of the Ackermann function is defined for $i,n\in \mathbb{Z}_{>0}$ as $a(i,n)=\min\{ j \ge 1 : A(i,j) > \log n\}$.
Note that $\alpha(m,n)\le i$ if $\frac{m}{n}\ge a(i,n)$, in particular, $\alpha(m,n)\le 2$ if $m\le n\log^* n$; see~\cite{DBLP:journals/jacm/TarjanL84}.
We shall make two claims relating the $\alpha(m,n)$ and $a(i,n)$ functions.
First, $\alpha(n,n)\le 4+a(3,n)$ for every $n\in \mathbb{Z}_{>0}$. This follows from:
\begin{multline*}
A(4+a(3,n),1)=A(3+a(3,n),2)=A(2+a(3,n),A(3+a(3,n),1))\ge\\  A(3,2^{3+a(3,n)})\ge A(3,a(3,n)) > \log n.
\end{multline*}
Moreover, the fact that  $a(3,n)\le 2+\ceil{\log a(2,n)}$ for every $n\in \mathbb{Z}_{>0}$ is a consequence of:
\begin{multline*}
A(3,2+\ceil{\log a(2,n)})=A(2, A(3,1+\ceil{\log a(2,n)}))\ge  A(2, 2^{3+\ceil{\log a(2,n)}})\ge\\  A(2,8a(2,n)) \ge A(2,a(2,n))>\log n.
\end{multline*}

To prove the lemma, we consider two cases: If $m+n \ge n\cdot a(3,n)$, then
$m\cdot \alpha(m+n,n)\le 3m$, so $\Oh(n+m\cdot \alpha(m+n,n))=\Oh(n+m)$.
Otherwise, the claims that we made above imply:
$$m\cdot \alpha(m+n,n)\le m\cdot \alpha(n,n)\le n\cdot a(3,n)\cdot (4+a(3,n))\le n(2+\ceil{\log a(2,n)})(6+\ceil{\log a(2,n)}),$$
i.e., $n+m\cdot \alpha(m+n,n)=\Oh(n\log^2 a(2,n))=\Oh(n \cdot a(2,n))=\Oh(n\log^* n)$,
as desired.
\end{proof}

\section{Generic LCE algorithm for general ordered alphabets}

We define $t$-short $\LCE$ queries by restricting the answer to at most $t$:
$$\ShortLCE_t(i,j)\;=\; \min(\LCE(i,j),\,t).$$
We define a $t$-block as a fragment of the input text $w$ which starts in $\S(t)$ and has length~$t$. 
If a position in $\S(t)$ lies near the end of $w$, we form a $t$-block from a suffix of $w$ and enough dummy symbols to reach length $t$.
We also introduce $t$-coarse LCE queries, which are LCE queries
restricted to positions from $\S(t)$ returning the number of matching $t$-blocks:
$$ \CoarseLCE_t(i,j)\;=\;
  \begin{cases}
   \lfloor \LCE(i,j)/t \rfloor  & \quad \text{if } i,j\in \S(t),\\
     \bot& \quad \text{otherwise.} \\
  \end{cases}
$$

We now describe how to use $\ShortLCE$ and $\CoarseLCE$ queries for general $\LCE$ queries.
\begin{lemma} \label{lem:generic}
If every sequence of $q$ $\ShortLCE_t$ queries and $\CoarseLCE_t$ queries can
be executed on-line in total time $T(n,q)$,
 then every sequence of $q$ $\LCE$ queries can be executed on-line in total time $T(n,\Oh(q))+\Oh(n+q)$.
\end{lemma}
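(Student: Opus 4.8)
The plan is to reduce every single $\LCE(i,j)$ query to at most two $\ShortLCE_t$ queries and one $\CoarseLCE_t$ query, plus $\Oh(1)$ arithmetic, after an $\Oh(n)$-time preprocessing; summing over the $q$ queries then gives total time $T(n,\Oh(q))+\Oh(n+q)$.

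First I would run the preprocessing: build a $t$-cover $\S(t)$ of $[1..n]$ together with its constant-time function $h$, which costs $\Oh(n/\sqrt t)=\Oh(n)$ time by the $t$-cover construction lemma above. To avoid corner cases at the right end of $w$, I would conceptually append $t$ pairwise distinct fresh symbols to $w$: this changes no $\LCE$ value among original positions, makes every original position an admissible argument of $h$, and keeps in range every position accessed below. The $\Oh(q+n)$ extra symbol comparisons caused by this padding fit within the stated budget.

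Next, for a query $\LCE(i,j)$, the plan is as follows. (i)~Compute $v=\ShortLCE_t(i,j)$; if $v<t$ then $\LCE(i,j)=v$ and we return it. (ii)~Otherwise $\LCE(i,j)\ge t$; put $d=h(i,j)\le t$ and $i'=i+d$, $j'=j+d$, so that $i',j'\in\S(t)$. Since $\LCE(i,j)\ge t\ge d$, the symbols at positions $i,\dots,i+d-1$ match those at $j,\dots,j+d-1$, hence $\LCE(i,j)=d+\LCE(i',j')$. (iii)~Compute $c=\CoarseLCE_t(i',j')=\lfloor\LCE(i',j')/t\rfloor$, which is legal since $i',j'\in\S(t)$; then $\LCE(i',j')\ge ct$, the first $c$ complete $t$-blocks starting at $i'$ and $j'$ match, and $\LCE(i',j')=ct+\LCE(i'+ct,j'+ct)$ with $\LCE(i'+ct,j'+ct)<t$. (iv)~Compute $r=\ShortLCE_t(i'+ct,j'+ct)$, which equals $\LCE(i'+ct,j'+ct)$ because this value is below $t$, and return $d+ct+r$. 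Correctness follows by chaining the three identities established in (ii)--(iv).

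For the running time I would simply observe that each $\LCE$ query issues at most two $\ShortLCE_t$ and one $\CoarseLCE_t$ query and performs $\Oh(1)$ extra work; hence $q$ $\LCE$ queries translate into an on-line sequence of $\Oh(q)$ short and coarse queries, served in $T(n,\Oh(q))$ time, together with $\Oh(q+n)$ overhead for the preprocessing and the arithmetic, for a total of $T(n,\Oh(q))+\Oh(n+q)$. The only genuinely delicate point --- and the step I expect to be the main obstacle --- is the boundary handling: ensuring that $h(i,j)$ is defined and that the positions $i'+ct$ and $j'+ct$ never run past the end of $w$; this is exactly what the fresh-symbol padding introduced in the preprocessing step takes care of.
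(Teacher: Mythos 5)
Your proof is correct and follows essentially the same route as the paper: one $\ShortLCE_t$ query to detect a short answer, a shift by $h_t(i,j)$ onto the $t$-cover, a $\CoarseLCE_t$ query to count matching $t$-blocks, and a final $\ShortLCE_t$ query to resolve the remainder. The only addition is your explicit padding to handle the right boundary, which the paper subsumes into its definition of $t$-blocks (suffix plus dummy symbols).
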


\begin{proof}
To calculate $\LCE(i,j)$ we first check if $\LCE(i,j) < t$ by calling
$\ShortLCE_t(i,j)$. If so, we are done. Otherwise, we can reduce computing
$\LCE(i,j)$ to computing $\LCE(i+\Delta,j+\Delta)$ for any $\Delta \leq t$.
In particular, we can choose $\Delta = h_t(i,j)$ so that $i+\Delta, j+\Delta \in \S(t)$.
Then we call $\CoarseLCE_t(i+\Delta,j+\Delta)$ which gives us the value $\lfloor \frac1t(\LCE(i,j)-\Delta) \rfloor$.
Computing the exact value of $\LCE(i,j)$ requires another $\ShortLCE_t$ query; see \cref{alg:lce_generic}.
The whole process is illustrated in \cref{fig:generic}.
\end{proof}

\begin{algorithm}[h]
\caption{$\mathrm{Generic}\LCE(i,j)$\label{alg:lce_generic}}
$\ell_1\, =\, \ShortLCE_t(i,j)$ \;
\lIf{$\ell_1 < t$}{%
  \KwRet{$\ell_1$}
}
$\Delta\, =\, h_t(i,j)$\Comment*[r]{$i+\Delta,j+\Delta\in \S(t)$}
  $\ell_2$\, =\, $t\cdot \CoarseLCE_t(i+\Delta,j+\Delta)$ \;
  $\ell_3$\, =\, $\ShortLCE_t(i+\Delta+\ell_2, j+\Delta+\ell_2)$ \;
\vskip 0.1cm
  \KwRet{$\Delta+\ell_2+\ell_3$} \;
\end{algorithm}

\begin{figure}[h]
\centering
\begin{tikzpicture}[xscale=0.5,yscale=0.3]
\def\lenT{2}
\def\lenD{1.2}
\def\lenLA{\lenT}
\def\lenLB{3*\lenT}
\def\lenLC{1.5}
\def\lenS{12}
\newcommand\period[2]{
  \foreach \i in {0,1,2} {
    \draw (#1+\i*\lenT, #2) sin (#1+\i*\lenT+0.5*\lenT, #2+0.5) cos (#1+\i*\lenT+\lenT, #2);
  }
  \node at (#1+0.5*\lenT, #2+0.5) [above] {\small $t$};
}

\tikzstyle{s} = [draw, circle, fill=black, minimum size = 3pt, inner sep = 0 pt, color=black]

\draw (0,0)--+(10,0);
\draw (\lenS,0)--+(10,0);

\foreach \x in {0,\lenD,\lenD+\lenLB} {
  \node[s] at (\x,0) {};
  \node[s] at (\lenS+\x,0) {};
}
\node at (0,0) [above left] {$i$};
\node at (\lenS,0) [above left] {$j$};

\foreach \i/\l/\t/\h in {
  0/\lenLA/$\ell_1$/-4,
  \lenD/\lenLB/$\ell_2$/-2,
  \lenD+\lenLB/\lenLC/$\ell_3$/-4
} {
  \draw [-latex] (\i,\h)--node[midway,below] {\t} +(\l,0);
  \draw [-latex] (\lenS+\i,\h)--node[midway,below] {\t} +(\l,0);
}

\draw [latex-latex] (0,1)--node[midway, above] {\small $\Delta$} +(\lenD,0);
\draw [latex-latex] (\lenS,1)--node[midway, above] {\small $\Delta$} +(\lenD,0);

\foreach \x/\y in {0/-4, \lenD/-2, \lenD+\lenLB/-4, \lenD+\lenLB+\lenLC/-4, 0/1, \lenD/1} {
  \draw [dotted] (\x,0)--(\x,\y);
  \draw [dotted] (\lenS+\x,0)--(\lenS+\x,\y);
}

\period{\lenD}{-2}
\period{\lenS+\lenD}{-2}

\node at (\lenS+10,-2) [right] {$\CoarseLCE$};
\node at (\lenS+10,-4) [right] {$\ShortLCE$};

\end{tikzpicture}
\caption{Illustration of \cref{alg:lce_generic} for the case $\ell_1 \ge \Delta$.}
\label{fig:generic}
\end{figure}

\section{$\ShortLCE_t$ queries in $\Oh(\log t)$ amortized time}
In this section we show how to implement fast on-line $\ShortLCE_t$ queries.
We assume that $t=2^k$ and set $t'=\Theta(\log t)$ to be a smaller power of two. 
The amortized running time is $\Oh(\log t + \sqrt{\log t}\log^* n)$,
which in particular is $\Oh(\log t)$ for $t=\log^{\Omega(1)} n$.
The key components are Union-Find structures and $t'$-covers. 
We start with a simpler (and slightly slower) algorithm without $t'$-covers.

\subsection{$\ShortLCE_t$ queries in  $\Oh(\log t \cdot \alpha((n+q)\log t, n\log t))$ amortized time}

\begin{lemma}
\label{lem:recursive}
A sequence of $q$ $\ShortLCE_{2^k}(i,j)$  queries can be executed on-line in
total time $\Oh((q+n)k\cdot \alpha((q+n)k,nk))$.
\end{lemma}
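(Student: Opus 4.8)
The plan is to maintain, for each level $\ell\in\{0,1,\ldots,k\}$, a disjoint-sets structure $U_\ell$ whose ground set is the set of text positions (we pad $w$ on the right with $2^k$ pairwise distinct fresh symbols so that every length-$2^\ell$ block starting inside the original text is well defined, and we assume $2^k\le n$, the general case reducing to this one). The maintained invariant is: if $\Find_\ell(a)=\Find_\ell(b)$ then the length-$2^\ell$ blocks of $w$ starting at $a$ and $b$ are equal; the converse is not required. I would answer $\ShortLCE_{2^k}(i,j)$ by a single call to a recursive procedure $f(a,b,h)$ returning $\min(\LCE(a,b),2^h)$, which may perform unions. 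Define $f(a,b,0)$ to return $1$ if $\Find_0(a)=\Find_0(b)$, and otherwise to compare $w[a]$ with $w[b]$, performing $\Union_0(a,b)$ and returning $1$ if they are equal and returning $0$ if not. For $h\ge 1$, $f(a,b,h)$ returns $2^h$ at once if $\Find_h(a)=\Find_h(b)$; otherwise it computes $v=f(a,b,h-1)$, returns $v$ if $v<2^{h-1}$, and otherwise computes $v'=f(a+2^{h-1},b+2^{h-1},h-1)$ and either performs $\Union_h(a,b)$ and returns $2^h$ (if $v'=2^{h-1}$) or returns $2^{h-1}+v'$ (if $v'<2^{h-1}$). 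A routine induction on $h$, using $\LCE(a+2^{h-1},b+2^{h-1})=\LCE(a,b)-2^{h-1}$ whenever the first half matches, shows that $f$ is correct and preserves the invariant; moreover, whenever $f(a,b,h)$ returns $2^h$, afterwards $a$ and $b$ lie in one class of $U_h$.

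For the time bound I would charge everything to disjoint-sets operations. First, the number of $\Union$ operations is $\Oh(nk)$: $\Union_\ell$ is invoked only after $\Find_\ell$ has witnessed that its arguments lie in distinct classes, so each invocation is a genuine merge, and each of the $\Oh(k)$ levels admits $\Oh(n)$ genuine merges. It remains to bound the number of calls to $f$, as each call does $\Oh(1)$ work beyond $\Oh(1)$ $\Find$ operations. Call an $f$-call \emph{long} if it returns the value $2^h$ (for its level $h$) and \emph{short} otherwise. A long call either is resolved immediately by the $\Find_h$ shortcut, making no recursive call, or it performs a $\Union$, so the number of long calls that recurse is $\Oh(nk)$. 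A short call at a positive level makes exactly one recursive call that is again short (at the next lower level), while no long call has a short child; hence every short call lies on one of the chains rooted at the $q$ top-level invocations, and each such chain has length at most $k+1$, giving $\Oh(qk)$ short calls in total.

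The recursion forest consists of $q$ trees in which every internal node has one or two children; its two-child nodes are exactly the long calls that recurse together with the short calls that branch, so there are $\Oh(nk+qk)$ of them, and therefore also $\Oh((n+q)k)$ leaves. Since every one-child internal node is short, there are $\Oh(qk)$ of those, so altogether $f$ is called $\Oh((n+q)k)$ times and performs $\Oh((n+q)k)$ $\Find$ operations and $\Oh(nk)$ $\Union$ operations. Treating $U_0,\ldots,U_k$ as one disjoint-sets structure on $\Oh(nk)$ elements and applying \cref{lem:uf} yields total time $\Oh\bigl(nk+(n+q)k\cdot\alpha((n+q)k+nk,\,nk)\bigr)$, which, because $nk=\Oh((n+q)k)$ and $\alpha$ is non-increasing in its first coordinate, is $\Oh\bigl((q+n)k\cdot\alpha((q+n)k,nk)\bigr)$, as required.

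The main obstacle I anticipate is the amortized analysis in the second and third paragraphs: one has to argue cleanly that the recursion decomposes into at most $q$ ``short'' chains of length $\Oh(k)$ plus ``long'' subtrees whose internal nodes are each paid for by a genuine $\Union$, and then turn this structural picture into the $\Oh((q+n)k)$ bound on the total number of calls. Once the call count and the $\Oh(nk)$ union count are in hand, the Ackermann-function term is a direct application of \cref{lem:uf}.
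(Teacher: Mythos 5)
Your algorithm is identical to the paper's \cref{alg:short}, and your accounting reaches the same conclusion by a slightly different route: where the paper proves by induction that a top-level $\ShortLCE_{2^k}$ call triggers at most $2k+1+2\unioncount$ recursive calls, you classify calls as long/short, bound short calls by $q$ chains of length $\Oh(k)$, charge recursing long calls to $\Union$s, and then count leaves via the binary-tree structure. Both yield $\Oh((n+q)k)$ total calls and $\Oh(nk)$ $\Union$s, after which \cref{lem:uf} gives the stated bound; the argument is correct and essentially the paper's own.
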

\begin{proof}
We compute $\ShortLCE_{2^k}(i,j)$ using a recursive procedure; see \cref{alg:short}.
The procedure first checks if $w[i..i+2^k-1]$ is already known to be equal to $w[j..j+2^k-1]$ using a Union-Find structure. 
If so, we are done. 
Otherwise, if $k=0$, we simply compare $w[i]$ and $w[j]$.
If $k>0$, we recursively calculate $\ShortLCE_{2^{k-1}}(i,j)$ and, if the call returns $2^{k-1}$, also $\ShortLCE_{2^{k-1}}(i,j)$.
Finally, if both calls return $2^{k-1}$, we update the Union-Find structure to store that
$w[i..i+2^k-1]=w[j..j+2^k-1]$.

\begin{algorithm}[ht]
\caption{$\ShortLCE_{2^k}(i,j)$: compute $\LCE(i,j)$ up to length $2^k$\label{alg:short}}
\lIf{$\Find_k(i)=\Find_k(j)$}{%
 \KwRet{$2^k$}
}
\vskip 0.15cm
\eIf{$k=0$}{
  \leIf{$w[i]=w[j]$}{$\ell=1$}{$\ell=0$}
}{
 $\ell = \ShortLCE_{2^{k-1}}(i,j)$ \;
 \If{$\ell=2^{k-1}$}{
   $\ell = 2^{k-1} + \ShortLCE_{2^{k-1}}(i+2^{k-1},j+2^{k-1})$ \;
  }
}
\vskip 0.15cm
\lIf{$\ell=2^k$}{%
  $\Union_k(i,j)$
}
\vskip 0.15cm
\KwRet{$\ell$}
\vskip 0.1cm
\end{algorithm}
To analyze the complexity of the procedure, we first observe that the total number of
calls to $\Union$ is $\Oh(nk)$, because each such call discovers that $w[i..i+2^k-1]=w[j..j+2^k-1]$ (which was not known before). 
Moreover, these calls contribute $\Oh(nk)$ to the total running time.
We argue that the number of executed $\Find$ queries and the running time of the remaining
operations performed by $\ShortLCE_{2^k}(i,j)$ is proportional to  $\Oh(k+1)$ plus the number of  $\Union$ calls, which implies the lemma.
For the sake of conciseness, $\unioncount$ denotes
the number of calls to $\Union$ triggered by the considered call to $\ShortLCE$ (including itself).

We inductively bound the number of recursive calls triggered by $\ShortLCE_{2^k}(i,j)$:
\begin{align*}
&2k+1+2\unioncount && \text{if }w[i..i+2^k-1]\neq w[j..j+2^k-1],\\
&1+2\unioncount && \text{if }w[i..i+2^k-1]= w[j..j+2^k-1].\\
\end{align*}
$\ShortLCE_{1}$ terminates immediately, so this holds for $k=0$. For $k>0$ we have four cases.
\begin{enumerate}
\item $w[i..i+2^k-1]$ is already known to be equal to $w[j..j+2^k-1]$. Then we terminate immediately.
\item $w[i..i+2^{k-1}-1] \neq w[j..j+2^{k-1}-1]$. Then the number of recursive calls triggered
by $\ShortLCE_{2^{k-1}}(i,j)$ is $2k-1+2\unioncount$  so the number of recursive
calls triggered by $\ShortLCE_{2^k}(i,j)$ is $2k+2\unioncount$.
\item $w[i..i+2^{k-1}-1] = w[j..j+2^{k-1}-1]$ but $w[i+2^{k-1}..i+2^{k}-1] \neq w[j+2^{k-1}..j+2^{k}-1]$.
The number of recursive calls triggered by $\ShortLCE_{2^{k-1}}(i,j)$ and
$\ShortLCE_{2^{k-1}}(i+2^{k-1},j+2^{k-1})$ is $1+2\unioncount$
and $2k-1+2\unioncount$, respectively.
The total number of triggered recursive calls is hence $2k+1+2\unioncount$.
\item $w[i..i+2^{k-1}-1] = w[j..j+2^{k-1}-1]$ and $w[i+2^{k-1}..i+2^{k}-1] = w[j+2^{k-1}..j+2^{k}-1]$.
The number of recursive calls triggered by both $\ShortLCE_{2^{k-1}}(i,j)$ and
$\ShortLCE_{2^{k-1}}(i+2^{k-1},j+2^{k-1})$ is $1+2\unioncount$.
However, $w[i..i+2^{k}-1]$ was not known to be equal to $w[j..j+2^{k}-1]$, so we then execute $\Union_k(i,j)$.
Hence the total number of recursive calls is $1+2\unioncount$ (rather than of $3+2\unioncount$).
\end{enumerate}

Consequently, the total running time follows from \cref{lem:uf}.
\end{proof}

\subsection{Faster $\ShortLCE_t$ queries}
Assume $t=2^k=\Omega(\log n)$.
We show how to reduce the factor $\alpha(qk+nk,nk)$ introducing a $t'$-cover,
for $t'=2^{k'}$.
We define a sparse version of $\ShortLCE$ queries, which are $\ShortLCE$ queries restricted to positions from $\S(t')$:
\[ \SpecLCE_{t,t'}(i,j)\;=\;
  \begin{cases}
   \ShortLCE_t(i,j)  & \quad \text{if } i,j\in \S(t')\\
    \bot & \quad \text{otherwise } \\
  \end{cases}
\]
We slightly modify \cref{alg:short} to obtain \cref{alg:spec}, which computes
$\min(\LCE(i,j),2^k)$ for positions $i,j\in \S(t')$.

\begin{algorithm}[b]
\caption{$\SpecLCE_{2^k,2^{k'}}(i,j)$: compute $\min(\LCE(i,j),2^k)$ for $i,j\in \S(2^{k'})$}\label{alg:spec}
\lIf{$\Find_k(i)=\Find_k(j)$}{%
 \KwRet{$2^k$}
}\vskip 0.15cm
\If{$k=k'$}{Compute naively $\ell = \ShortLCE_{2^{k'}}(i,j)$\;}
\Else{
 $\ell = \SpecLCE_{2^{k-1},2^{k'}}(i,j)$ \;
 \If{$\ell=2^{k-1}$}{
   $\ell = 2^{k-1} + \SpecLCE_{2^{k-1},2^{k'}}(i+2^{k-1},j+2^{k-1})$ \;
  }
}
\vskip 0.15cm
\lIf{$\ell=2^k$}{%
  $\Union_k(i,j)$ 
}
\vskip 0.15cm
\KwRet{$\ell$}
\vskip 0.1cm
\end{algorithm}
\begin{figure}[t]
\centering
\scalebox{0.7}{\begin{tikzpicture}
\tikzstyle{level 1}=[level distance=1.5cm, sibling distance=5cm]
\tikzstyle{level 2}=[level distance=1.5cm, sibling distance=2cm]
\tikzstyle{level 3}=[level distance=1.5cm, sibling distance=2cm]
\tikzstyle{level 4}=[level distance=1.5cm, sibling distance=1.5cm]
\node {Find}
  child {
     node {Union}
     child {
        node {Union}
        child { node {Find} }
        child {
           node {Union}
           child { node {Find} }
           child {
              node[text width=7em, text centered] {Naive and Union}
           }
        }
     }
     child {
        node {Find}
     }
  }
  child {
     node {Find}
     child { 
        node { Find }
        child {
           node { Union }
           child {
              node[text width=7em, text centered] {Naive and Union}
           }
           child { node {Find} }
        }
        child { 
           node { Find } 
           child { node { Naive } } 
        } 
     } 
  }
;
\end{tikzpicture}
}
\caption{A recursion tree of $\SpecLCE_{t,t'}(i,j)$
for some example parameters  such that $t=2^4t'$.
The calls terminating with $\Union$, $\Find$ and
naive tests (in a segment of size $t'$) are shown as nodes in the figure. 
The naive tests are only at the bottom of
the tree and they are accompanied by Unions (except the last one).  }
\end{figure}
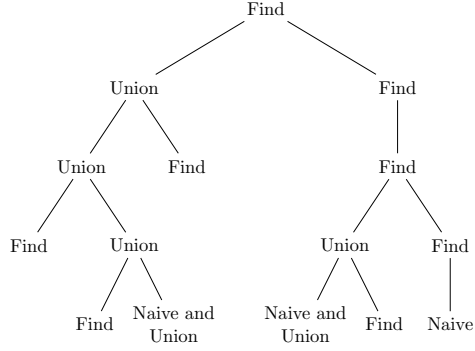

\begin{lemma}
\label{lem:recursive2}
A sequence of $q$ $\SpecLCE_{2^k,2^{k'}}$ queries can be executed
on-line in total time $\Oh(q(k+2^{k'})+n\sqrt{2^{k'}} + \frac{nk}{\sqrt{2^{k'}}}\log^*n)$.
\end{lemma}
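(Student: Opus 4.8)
The plan is to analyze Algorithm~\ref{alg:spec} following the same bookkeeping as in Lemma~\ref{lem:recursive}, but accounting separately for the three kinds of leaf work (naive comparisons in a $t'$-block, $\Find$ calls, $\Union$ calls) and, crucially, bounding the total number of $\Union$ operations more carefully so that Lemma~\ref{lem:ackermann} yields the $\log^* n$ factor rather than $\alpha$. First I would observe that each successful $\Union_k(i,j)$ records a newly-discovered equality $w[i..i+2^k-1]=w[j..j+2^k-1]$ with $i,j\in \S(t')$, and that for each length-scale $2^k$ the first coordinate $i$ ranges over $\S(t')\cap[1..n]$, which has size $\Oh(n/\sqrt{t'})$ by the $t'$-cover construction. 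Since there are $k-k'+1=\Oh(k)$ relevant scales, the total number of $\Union$ operations across the whole sequence is $\Oh(\frac{nk}{\sqrt{t'}})$; by the same argument each $\Union$ contributes $\Oh(1)$ directly, giving an $\Oh(\frac{nk}{\sqrt{t'}})$ additive term.

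Next I would repeat the inductive recursion-count argument of Lemma~\ref{lem:recursive}, adapted to the modified base case $k=k'$: I claim the number of recursive calls triggered by $\SpecLCE_{2^k,2^{k'}}(i,j)$ is $\Oh((k-k'+1)+\unioncount)$, with the constant absorbing the factor-2 blowup exactly as in the four-case analysis there (cases 1--4 go through verbatim after replacing ``$k=0$'' by ``$k=k'$'' and shifting the base of the recursion). Summing over $q$ queries, the number of recursive calls — and hence the number of $\Find$ operations — is $\Oh(q(k-k'+1) + \#\Union) = \Oh(qk + \frac{nk}{\sqrt{t'}})$. Each call that reaches the base case $k=k'$ performs a naive $\ShortLCE_{2^{k'}}$ computation costing $\Oh(2^{k'})$ time; since a single top-level query produces $\Oh(1)$ such base-case calls along each root-to-leaf branch, and more precisely the number of base-case calls it triggers is $\Oh(1+\unioncount)$, the total naive-comparison cost is $\Oh(q\,2^{k'} + \frac{nk}{\sqrt{t'}}\cdot 2^{k'})$; but a $\Union$ at the base scale $k'$ is charged $2^{k'}$ here, and there are only $\Oh(n/\sqrt{t'})$ such base-scale $\Union$s total, so this reduces to $\Oh(q\,2^{k'} + n\sqrt{t'})$ using $\frac{n}{\sqrt{t'}}\cdot 2^{k'} = n\sqrt{t'}$.

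Finally I would invoke the disjoint-sets bound: we have a universe of $\Oh(n)$ elements for each of $\Oh(k)$ scales — equivalently one Union-Find instance on $\Oh(nk)$ elements — with $\Oh(qk+\frac{nk}{\sqrt{t'}})$ $\Find$ operations and $\Oh(\frac{nk}{\sqrt{t'}})$ $\Union$ operations. Lemma~\ref{lem:uf} gives a running time of $\Oh\bigl(\frac{nk}{\sqrt{t'}} + (qk+\frac{nk}{\sqrt{t'}})\cdot\alpha(\cdot,\cdot)\bigr)$; the key point is that the number of $\Union$s, $\Oh(\frac{nk}{\sqrt{t'}})$, is the effective ``$n$'' in the Union-Find analysis while the number of $\Find$s is the ``$m$'', so Lemma~\ref{lem:ackermann} applied with these parameters turns the $\alpha$ factor on the $\frac{nk}{\sqrt{t'}}$ part into a $\log^*$ factor, contributing $\Oh(\frac{nk}{\sqrt{t'}}\log^* n)$, while the $qk\cdot\alpha$ part is dominated once one checks $\alpha = \Oh(\log^* n)$ and folds it appropriately — here I would be careful to verify that the $qk$ term does not itself pick up a spurious $\log^*$, which holds because $\alpha(m+n,n)=\Oh(1)$ whenever $m = \Omega(n\log^* n)$, and otherwise $m\cdot\alpha \le \Oh(n\log^* n)$ with $n$ being the $\Union$ count. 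Adding the three contributions — $\Oh(q(k+2^{k'}))$ from recursion overhead and naive tests, $\Oh(n\sqrt{2^{k'}})$ from base-scale work, and $\Oh(\frac{nk}{\sqrt{2^{k'}}}\log^* n)$ from Union-Find — gives exactly the claimed bound. I expect the main obstacle to be the last step: correctly identifying which quantity plays the role of ``$n$'' and which plays ``$m$'' in Lemma~\ref{lem:uf} so that the $\log^*$ attaches only to the $\frac{nk}{\sqrt{t'}}$ term and not to $qk$, since a naive application would give $\Oh(qk\log^* n)$ and destroy the intended bound; this requires the observation that $\Union$s are few ($\Oh(\frac{nk}{\sqrt{t'}})$) and using the sharper form of Lemma~\ref{lem:ackermann} with $n\mapsto \#\Union$, $m\mapsto \#\Find$.
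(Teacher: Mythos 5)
Your proposal is correct and follows essentially the same route as the paper: bound the total number of $\Union$ operations by $\Oh(\frac{nk}{\sqrt{t'}})$ via the restriction to $\S(t')$, reuse the recursion-count argument of \cref{lem:recursive} for the $\Find$ calls, account for the naive base-case $\ShortLCE_{2^{k'}}$ computations separately (at most one per query returns less than $2^{k'}$, the rest are paired with base-scale $\Union$s of which there are $\Oh(n/\sqrt{t'})$), and then apply \cref{lem:ackermann} with the $\Union$ count playing the role of $n$ and the $\Find$ count the role of $m$. The only cosmetic difference is that the paper's accounting of the naive computations is phrased more directly (at most one mismatching computation per top-level call, plus one per base-scale $\Union$), whereas you derive the same bound via the $\Oh(1+\unioncount)$ count of base-case calls and then refine it.
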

\begin{proof}
The analysis is similar to the proof of \cref{lem:recursive}.
The total number of calls to $\Union$ is now only $\Oh(\frac{nk}{2^{k'/2}})$
because we always have that $i,j\in \S(2^{k'})$. 
Hence, excluding the cost of computing $\ell = \ShortLCE_{2^{k'}}(i,j)$,
 the total time complexity is $\Oh(qk + \frac{nk}{2^{k'/2}}\log^*n)$ by the same reasoning as in \cref{lem:recursive},
except that we additionally apply \cref{lem:ackermann} to bound the running time of the Union-Find data structure (stated in \cref{lem:uf}).

Now we analyze the cost of computing $\ell = \ShortLCE_{2^{k'}}(i,j)$.
First, observe that for every original call to $\SpecLCE_{2^k,2^{k'}}(i,j)$ we have
at most one such computation with $\ell < 2^{k'}$ (because it means that we have
found a mismatch and no further recursive calls are necessary).
On the other hand, if $\ell = 2^{k'}$, then we call $\Union_{k'}(i,j)$, 
which may happen at most $\frac{n}{2^{k'/2}}$ times. 
Therefore, the total complexity of all these naive computations is $\Oh(n2^{k'/2}+q\cdot 2^{k'})$.
\end{proof}

   \begin{algorithm}[t]
\caption{Faster$\ShortLCE_{2^k,2^{k'}}(i,j)$\label{alg:lce_version_2}}
    Compute naively $\ell = \ShortLCE_{2^{k'}}(i,j)$\;
    \vskip 0.15cm
    \lIf{$\ell < 2^{k'}$}{\KwRet{$l$}}
    \vskip 0.15cm
    $\Delta = h_{2^{k'}}(i,j)$ \;
    $\ell = \Delta + \SpecLCE_{2^{k},2^{k'}}(i+\Delta, j+\Delta)$\;
    \KwRet{$\min(\ell,2^k)$}
    \vskip 0.1cm
    \end{algorithm}

The next lemma is a direct consequence of \cref{lem:recursive2} and \cref{alg:lce_version_2}
with $2^{k'}=\Theta(k)$.

\begin{lemma}\label{lem:short}
A sequence of $q$ $\ShortLCE_{{2^k}}$ queries can be executed
on-line in total time $\Oh(q k + n\sqrt{k}\log^* n)$.
\end{lemma}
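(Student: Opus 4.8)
The plan is to combine \cref{lem:recursive2} with \cref{alg:lce_version_2}, choosing the secondary parameter $2^{k'}=\Theta(k)$, and then simply verify that the resulting bound collapses to $\Oh(qk+n\sqrt{k}\log^*n)$.

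First I would argue correctness of \cref{alg:lce_version_2}. The query $\ShortLCE_{2^k}(i,j)$ asks for $\min(\LCE(i,j),2^k)$ with no restriction on $i,j$. The algorithm first computes $\ell=\ShortLCE_{2^{k'}}(i,j)$ naively; if $\ell<2^{k'}$ then a mismatch occurs within the first $2^{k'}\le 2^k$ symbols, so $\LCE(i,j)=\ell$ and we may return $\ell$ directly. Otherwise $w[i..i+2^{k'}-1]=w[j..j+2^{k'}-1]$, and by the defining property of the $2^{k'}$-cover $\S(2^{k'})$ there is $\Delta=h_{2^{k'}}(i,j)\in[0..2^{k'}]$ with $i+\Delta,j+\Delta\in\S(2^{k'})$. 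Since $\Delta\le 2^{k'}=\LCE$-so-far, shifting by $\Delta$ is safe: $\LCE(i,j)=\Delta+\LCE(i+\Delta,j+\Delta)$, and as $i+\Delta,j+\Delta\in\S(2^{k'})$ the sparse query $\SpecLCE_{2^k,2^{k'}}(i+\Delta,j+\Delta)$ returns $\min(\LCE(i+\Delta,j+\Delta),2^k)$. Adding $\Delta$ and capping at $2^k$ therefore yields exactly $\min(\LCE(i,j),2^k)$.

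Next I would bound the running time. A sequence of $q$ $\ShortLCE_{2^k}$ queries triggers exactly $q$ naive $\ShortLCE_{2^{k'}}$ computations — each costing $\Oh(2^{k'})$ — plus $q$ calls to $\SpecLCE_{2^k,2^{k'}}$, and each $h_{2^{k'}}$ evaluation is $\Oh(1)$. By \cref{lem:recursive2} the latter cost $\Oh(q(k+2^{k'})+n\sqrt{2^{k'}}+\frac{nk}{\sqrt{2^{k'}}}\log^*n)$ in total; note also that the standing hypothesis $t=2^k=\Omega(\log n)$ of the subsection guarantees $\S(2^{k'})$ and its function $h_{2^{k'}}$ are available for $2^{k'}=\Theta(k)$, and the additional $\Oh(q\cdot 2^{k'})=\Oh(qk)$ for the naive calls is absorbed into the first term. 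Substituting $2^{k'}=\Theta(k)$: the term $q(k+2^{k'})$ becomes $\Oh(qk)$; the term $n\sqrt{2^{k'}}$ becomes $\Oh(n\sqrt k)$; and the term $\frac{nk}{\sqrt{2^{k'}}}\log^*n=\frac{nk}{\Theta(\sqrt k)}\log^*n$ becomes $\Oh(n\sqrt k\log^*n)$. Summing, the total is $\Oh(qk+n\sqrt k\log^*n)$, as claimed.

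There is no real obstacle here — the lemma is explicitly flagged as "a direct consequence" — so the only point deserving care is the bookkeeping that shows every individual summand of \cref{lem:recursive2}, after the substitution $2^{k'}=\Theta(k)$, is dominated by one of $qk$ or $n\sqrt k\log^*n$, together with the remark that the $q$ top-level naive calls contribute only an additional $\Oh(qk)$ and the $\Oh(1)$-time function $h_{2^{k'}}$ contributes nothing asymptotically.
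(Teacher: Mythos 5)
Your proposal is correct and matches the paper's approach exactly: the paper itself states the lemma is a direct consequence of \cref{lem:recursive2} and \cref{alg:lce_version_2} with $2^{k'}=\Theta(k)$, and you have simply carried out the routine substitution and bookkeeping that the paper leaves implicit. The arithmetic (each of the three summands from \cref{lem:recursive2} collapsing to $\Oh(qk)$ or $\Oh(n\sqrt{k}\log^*n)$, plus the $\Oh(qk)$ for the top-level naive scans) is right, and your correctness argument for the shift by $\Delta=h_{2^{k'}}(i,j)$ is the intended one.
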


\section{$\CoarseLCE_t$ queries}
Let $t=\Omega(\log^2 n)$. Recall that we defined a  $t$-block of $w$ as a factor of size $t$ starting in $\S(t)$.
We want to show how to preprocess $w$ in $\Oh(n\log\log n)$ time,
so that any $\CoarseLCE_t$ query can be answered in constant time.
To this end we proceed as follows:
\begin{enumerate}
\item sort all $t$-blocks in lexicographic order and remove duplicates,
\item encode every $t$-block with its rank on the sorted list,
\item construct a new string $\code(w)$ of length $\Oh(n)$ over alphabet $[1..n]$,
such that any $\CoarseLCE_t$ query can be reduced to an LCE query on $\code(w)$,
\item preprocess $\code(w)$ for LCE queries.
\end{enumerate}

\begin{lemma}
\label{lem:rename}
For $t=\Omega(\log^2 n)$ we can lexicographically sort all $t$-blocks of $w$ in $\Oh(n\log t)$ time.
\end{lemma}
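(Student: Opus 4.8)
The plan is to sort the $t$-blocks not as raw strings of length $t$ but as short strings over a larger alphabet, obtained by doubling. There are $\Oh(n/\sqrt t)$ $t$-blocks and each has length $t$; comparing them naively would cost too much, so I would chop every $t$-block into $\sqrt t$ consecutive sub-blocks of length $\sqrt t$ each (padding with dummy symbols as before). First I would gather all the $\Oh(n/\sqrt t \cdot \sqrt t) = \Oh(n)$ sub-blocks of length $\sqrt t$, sort them, and assign each a rank in $[1..n]$; this rank array lets us view every $t$-block as a string of length $\sqrt t$ over the integer alphabet $[1..n]$. Then a radix sort (or the doubling technique applied to these $\Oh(n/\sqrt t)$ length-$\sqrt t$ integer strings) sorts the $t$-blocks themselves, after which duplicates are removed in a linear scan.

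The remaining question is the cost of sorting the $\Oh(n)$ sub-blocks of length $\sqrt t$. Here I would again use prefix doubling: sort by the first symbol using the ordered alphabet ($\Oh(n\log n)$ comparisons is too much, but we only pay $\Oh(n)$ symbol comparisons to build a comparison-based sorted order of the distinct symbols, then assign ranks; sorting $n$ items by a precomputed $[1..n]$-valued key is $\Oh(n)$ by radix sort), then repeatedly double the compared length, each doubling step being a radix sort on pairs of previously computed ranks in $\Oh(n)$ time. After $\log(\sqrt t) = \frac12\log t$ rounds the sub-blocks of length $\sqrt t$ are fully sorted, for a total of $\Oh(n\log t)$ time. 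The same doubling over the length-$\sqrt t$ integer strings representing the $t$-blocks costs $\Oh(\frac{n}{\sqrt t}\cdot \log\sqrt t) = \Oh(n\log t)$ as well, which is dominated. The hypothesis $t=\Omega(\log^2 n)$ guarantees $\sqrt t = \Omega(\log n)$, so the alphabet $[1..n]$ used for the integer-string representations is not larger than the number of positions, keeping the radix sorts linear.

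The main obstacle I expect is the very first step: we start with a \emph{general ordered} alphabet, so even producing the initial rank of each individual character needs care — we cannot afford $\Omega(n\log n)$ comparisons if we want to honor the paper's overall $\Oh(q+n)$ symbol-comparison budget, yet comparison-based sorting of $n$ symbols inherently needs $\Omega(n\log n)$ comparisons in the worst case. I would resolve this by sorting only the \emph{first symbols} of the sub-blocks — or, more carefully, by observing that within the global $\Oh(n\log^* n)$ preprocessing budget the paper already tolerates more than $\Oh(n)$ time, and that the number of symbol comparisons charged here can be folded into the $\Oh(q+n)$ bound because each comparison is performed during a doubling step that also discovers new equalities/inequalities; alternatively, one sorts the characters once with $\Oh(n\log n)$ \emph{time} but only $\Oh(n)$ \emph{comparisons} by using a balanced search tree keyed on already-seen characters, inserting each new character with $\Oh(\log n)$ tree operations but amortized $\Oh(1)$ comparisons against distinct representatives when characters repeat. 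Getting this accounting exactly right — separating the time bound ($\Oh(n\log t)$, as stated) from the comparison bound (which must stay $\Oh(n)$) — is the delicate point; the rest is standard radix-sort doubling.
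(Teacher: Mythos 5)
Your approach is genuinely different from the paper's, but it contains a gap that the paper's approach is specifically designed to sidestep. You correctly anticipate the problem yourself: reducing the sub-blocks to integer strings requires first ranking all $n$ individual characters, and in the general ordered-alphabet comparison model this inherently needs $\Omega(n\log n)$ comparisons, hence $\Omega(n\log n)$ time. This already violates the lemma's stated bound of $\Oh(n\log t)$: the lemma must hold for $t = \Theta(\log^2 n)$, where $\log t = \Theta(\log\log n)$, and $n\log n = \omega(n\log\log n)$. Neither of your proposed fixes closes this gap. The global preprocessing budget $\Oh(n\log^* n)$ mentioned in the paper does not ``tolerate'' an $\Oh(n\log n)$-time step, since $\log^* n = o(\log n)$. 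And the balanced-search-tree idea does not give amortized $\Oh(1)$ comparisons per insertion: navigating a BST to the correct leaf costs $\Theta(\log(\text{tree size}))$ comparisons whether or not the key turns out to already be present, and in a pure comparison model there is no way to short-circuit that by hashing or by recognizing a previously seen symbol. With $n$ distinct symbols the sort genuinely needs $\Theta(n\log n)$ comparisons.

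The paper avoids this entirely by never descending to the character level. There are only $\Oh(n/\sqrt{t})$ $t$-blocks. Sorting them with a standard comparison sort, using a $\ShortLCE_t$ query as the comparator, takes $\Oh\bigl(\frac{n}{\sqrt{t}}\log\frac{n}{\sqrt{t}}\bigr) = \Oh\bigl(\frac{n}{\sqrt{t}}\log n\bigr)$ comparisons, and the hypothesis $t=\Omega(\log^2 n)$ gives $\sqrt{t}=\Omega(\log n)$, so this is $\Oh(n)$ queries. By \cref{lem:short}, $\Oh(n)$ on-line $\ShortLCE_{2^k}$ queries with $2^k = t$ run in total time $\Oh(n k + n\sqrt{k}\log^* n)=\Oh(n\log t)$. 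So the paper pays $\Oh(\log t)$ amortized per $\ShortLCE_t$ query rather than setting up an integer alphabet; the crucial arithmetic is that the sparse sample of $\Oh(n/\sqrt{t})$ blocks makes $\log n$ comparison-sort overhead affordable. If you want to salvage a doubling-based argument, you would need to replace the initial per-character sort with something that also runs in $\Oh(n\log t)$ time and $\Oh(n)$ comparisons; within the comparison model that step is not available, so the whole decomposition into length-$\sqrt{t}$ sub-blocks does not lead anywhere here.
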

\begin{proof}
Two $t$-blocks can be lexicographically compared with a $\ShortLCE_{t}$ query.
We have $\Oh(\frac{n}{\sqrt{t}})$ such blocks, hence one of the classical
sorting algorithms they can be all sorted using $\Oh(\frac{n}{\sqrt{t}}\log n)=\Oh(n)$ queries.
By \cref{lem:short}, the total time to execute these queries and sort all $t$-blocks is therefore $\Oh(n\log t)$.
\end{proof}

We can use the lexicographic order of $t$-blocks to assign ranks to all $t$-blocks. 
Then we reduce $\CoarseLCE$ queries to LCE queries in a word $\code(w)$ over an integer alphabet; see \cref{fig:coarse}.
\begin{lemma}\label{lem:coarse}
For $t=\Omega(\log^2 n)$ we can preprocess $w$ in $\Oh(n\log t)$ time so that any 
$\CoarseLCE_t$ query can be answered in constant time.
\end{lemma}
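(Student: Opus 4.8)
The plan is to realize the four-step program outlined just before the lemma statement, using \cref{lem:rename} for the expensive part (sorting) and a standard linear-time LCE structure for integer alphabets at the end. First I would invoke \cref{lem:rename} to lexicographically sort all $\Oh(n/\sqrt t)$ distinct $t$-blocks in $\Oh(n\log t)$ time, and assign to each $t$-block a rank in $[1..n]$ equal to its position on the sorted list of distinct blocks (so equal blocks get equal ranks). Storing these ranks in an array indexed by the $\Oh(n/\sqrt t)$ positions of $\S(t)$ takes $\Oh(n/\sqrt t)$ additional time and space. Since $t=\Omega(\log^2 n)$, the cost so far is $\Oh(n\log t)$.

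Next I would define $\code(w)$ so that a $\CoarseLCE_t$ query reduces to an ordinary LCE query. The key observation is that $\S(t)$ is $t$-periodic, so its elements come in residue classes modulo $t$: if $r\in\D$ (the underlying $t$-difference-cover), the positions of $\S(t)$ with $i\bmod t=r$ form an arithmetic progression $r, r+t, r+2t,\dots$. For a fixed residue $r\in\D$, I would build a string $\code_r(w)$ whose $\ell$-th symbol is the rank of the $t$-block starting at the $\ell$-th such position; then $\CoarseLCE_t(i,j)$ for $i,j$ in residue class $r$ equals exactly the ordinary $\LCE$ of $\code_r(w)$ at the corresponding indices, because consecutive $t$-blocks in the same class are disjoint and tile the relevant suffix. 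To handle the general case where $i,j$ need not share a residue, note that $\CoarseLCE_t$ is only required to be correct (and non-$\bot$) when $i,j\in\S(t)$, and one checks directly from the definition $\CoarseLCE_t(i,j)=\lfloor\LCE(i,j)/t\rfloor$ that if $i,j\in\S(t)$ have different residues then $\LCE(i,j)<t$, forcing the answer to be $0$; this is detectable in constant time once we know the residues, or alternatively absorbed by a single $\ShortLCE_t$ check. So I would take $\code(w)$ to be the concatenation of the $\code_r(w)$ over $r\in\D$ (separated by distinct sentinels, or simply kept as $\Oh(\sqrt t)$ separate arrays with a constant-time lookup table mapping a position of $\S(t)$ to its pair (array index, offset)); the total length is $\Oh(n/\sqrt t)\cdot\Oh(1)$ per block — wait, more carefully, the total number of $t$-block occurrences is $\Oh(n/\sqrt t)$, so $|\code(w)|=\Oh(n/\sqrt t)=\Oh(n)$, over the alphabet $[1..n]$.

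Finally, I would preprocess $\code(w)$ (or each $\code_r(w)$) for constant-time LCE queries using the classical integer-alphabet construction: build the suffix array and LCP array in $\Oh(|\code(w)|)$ time and equip the LCP array with a range-minimum structure, so that each LCE query is answered in $\Oh(1)$. Since $|\code(w)|=\Oh(n)$, this costs $\Oh(n)$ time, which is dominated by the $\Oh(n\log t)$ sorting cost. Putting the pieces together, a $\CoarseLCE_t(i,j)$ query first uses the lookup table to locate $i$ and $j$ within the appropriate $\code_r$ arrays (returning $0$ if the residues differ), then issues one $\Oh(1)$ LCE query on that array; hence the preprocessing is $\Oh(n\log t)$ and each query is $\Oh(1)$, as claimed.

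The main obstacle I anticipate is getting the indexing bijection exactly right: I must verify that for $i,j$ in the same residue class, $\lfloor\LCE(i,j)/t\rfloor$ in $w$ equals the plain $\LCE$ of the two rank-sequences in $\code_r(w)$ — i.e., that the first differing $t$-block on the two sides is precisely where the mismatch in $w$ first crosses a block boundary, and that a partial (non-block-aligned) mismatch inside a shared block can only occur in the block immediately after the last matching one, which is consistent with taking the floor. This requires noting that within a residue class the $t$-blocks are disjoint and consecutive, so $w[i..i+mt-1]=w[j..j+mt-1]$ iff the first $m$ ranks agree; everything else is a routine unwinding of definitions. Handling the string boundary (where a $t$-block is padded with dummy symbols, as stipulated before the lemma) is a minor point: the dummy symbols are chosen to compare as a fresh smallest character, so two padded blocks are equal only if they are the same suffix, which is the desired behavior.
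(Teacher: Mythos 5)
Your plan matches the paper's at a high level --- rank the $t$-blocks via \cref{lem:rename}, pack the ranks into strings indexed by residue class, and answer $\CoarseLCE_t$ by an LCE query on a block-encoded string over $[1..n]$ --- and your block-length accounting and the same-residue correctness argument are sound. But there is a genuine gap in how you dispose of the cross-residue case. You assert that if $i,j\in\S(t)$ lie in different residue classes mod $t$ then $\LCE(i,j)<t$, so the answer is $0$. This is false: $\LCE(i,j)$ depends only on the text, not on the residues of $i$ and $j$. Take $w=a^n$ and any two positions of $\S(t)$ in different residue classes; then $\LCE(i,j)\approx n$ and $\CoarseLCE_t(i,j)=\Omega(n/t)\neq 0$. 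Moreover this case is not avoidable in the reduction of \cref{lem:generic}: the $t$-cover only guarantees that $i+h_t(i,j)$ and $j+h_t(i,j)$ both land in $\S(t)$, not that they share a residue. In the example of \cref{fig:diff_cover_example}, $h(3,10)=5$ sends $3\mapsto 8$ and $10\mapsto 15$, which have residues $2$ and $3$ mod $6$ respectively, so your scheme would return $0$ regardless of the text.

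The fix is precisely the concatenation you mention in passing but then set aside: build a single string $\code(w)=z_1\#_1 z_2\#_2\cdots z_k\#_k$ with the $\#_s$ distinct fresh symbols (this is what the paper does), and answer every $\CoarseLCE_t(p,q)$ by mapping $p,q$ to their positions $p',q'$ inside $\code(w)$ and returning $\LCE(p',q')$ there, whether or not $p$ and $q$ share a residue. The separators prevent the LCE in $\code(w)$ from running past the end of any $z_s$, and within each $z_s$ the blocks along one residue progression are disjoint and contiguous, so the number of leading matching ranks equals $\lfloor\LCE(p,q)/t\rfloor$ --- the same unwinding you already carried out for the same-residue case applies verbatim. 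Once this is corrected, the rest of your proposal (linear-time suffix array plus RMQ on $\code(w)$ over an integer alphabet, dominated by the $\Oh(n\log t)$ sorting cost) coincides with the paper's proof.
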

\begin{proof}
Using \cref{lem:rename}, we assign a number to each $t$-block,
so that two $t$-blocks are identical if and only if their numbers are equal.
The number assigned to the block starting at position $p \in \S(t)$ is denoted
$\rank(p)$. These numbers are ranks on a sorted list of length $|\S(t)|$, so
$\rank(p) \in [1..|\S(t)|]$. Then we construct a new string $\code(w)$ as follows.
Let
$\{\,i_1,i_2,\ldots i_k\,\} \,=\, [1,t]\cap \S(t)$
and $z_s$ be the word obtained from $w$ by concatenating the numbers assigned
to all $t$-blocks starting at positions $i_s,i_s+t,i_s+2t,i_s+3t,\ldots$:
$$ z_s = \rank(i_s) \rank(i_s + t) \rank(i_s + 2t) \rank(i_s + 3t ) \ldots .$$
Finally, we introduce $k$ new distinct letters $\#_1,\#_2,\ldots,\#_s$ and
construct $\code(w)$:
$$code(w)\;=\; z_1\cdot\#_1\cdot z_2\cdot\#_2\cdot z_3\cdot\#_3 \cdots z_k\cdot\#_k.$$
Next, $\code(w)$ is preprocessed to answer $\LCE$ queries in constant time.
A $\CoarseLCE_t(p,q)$ query for positions $p,q \in \S(t)$ is answered by first computing positions $p',q'$ corresponding to $p,q$ in $\code(w)$. 
Formally, if $p = i_s \bmod t$, then $p' = |z_1 \#_1 z_2 \#_2 \ldots z_{s-1} \#_{s-1}| + \frac{p-i_s}{t}+1$;
$q'$ is computed similarly. Then an $\LCE(p',q')$ query on $\code(w)$ returns
$\CoarseLCE_t(p,q)$. 
The positions $p'$ and $q'$ can be computed in constant time, so the total query
time is constant. Preprocessing $\code(w)$ requires constructing its suffix array,
which takes linear time for integer alphabets of polynomial size,
and preprocessing it for range minimum queries, which also takes linear time.
Hence the total preprocessing time is $\Oh(n\log t)$.
\end{proof}

\begin{figure}[t]
\begin{center}
%% data for the picture are taken from prg/fig-coarse-data.py
\scalebox{0.8}{\begin{tikzpicture}[scale=0.35]
\draw (1,0)--(25,0)--(25,1.3)--(1,1.3)--cycle;

\foreach \i in {2, 3, 6, 7, 10, 11, 12, 15, 16, 19, 20, 21} {
  \node at (\i+0.5,0) [above] {\tt a};
}

\foreach \i in {1, 4, 5, 8, 9, 13, 14, 17, 18, 22, 23, 24} {
  \node at (\i+0.5,0) [above] {\tt b};
}

\foreach \i in {25,26,27,28} {
  \node at (\i+0.5,0) [above] {\tt *};
}

\foreach \i in {2,3,5,8,9,11,14,15,17,20,21,23} {
  \draw[thick] (\i,0)--+(0,1.3);
  \node at (\i,1.3) [above] {\footnotesize \i};
}

\foreach \i/\l/\h/\t in {
  2/6/-2/1, 8/6/-2/8,  14/6/-2/6, 20/6/-2/2, 
  3/6/-4/3, 9/6/-4/5,  15/6/-4/1, 21/6/-4/4,
  5/6/-6/6, 11/6/-6/1, 17/6/-6/8, 23/6/-6/7
} {
  \draw (\i,\h) rectangle (\i+\l,\h+1);
  \node at (\i+0.5*\l,\h+0.5) {\small \tt \t};
}

\node at (-0.5,0.5) [left] {$w:$};
\node at (0,-1.5) {$\alpha$};
\node at (0,-3.5) {$\beta$};
\node at (0,-5.5) {$\gamma$};

\node at (-0.5,-7.5) [left] {$\code(w):$};

\draw (1.5,-8)--+(15,0)--+(15,1)--+(0,1)--cycle;
\foreach \x in {4,5,9,10,14}  {
  \draw (1.5+\x,-8)--+(0,1);
}

\foreach[count=\i] \x in {1,8,6,2,\$,3,5,1,4,\#,6,1,8,7,\&} {
  \node at (\i+1,-7.5) {\tt \x};
}

\foreach \x/\l/\t in {1/4/\alpha, 6/4/\beta, 11/4/\gamma} {
  \draw [decorate,decoration={brace,mirror,amplitude=4pt}]
              (\x+0.5,-8.2) --  node [midway,yshift=-0.4cm] {$\t$} +(\l,0);
}
\end{tikzpicture}}
\end{center}
\caption{$6$-blocks of $w$ are lexicographically sorted (using $\ShortLCE_t$) and ranked. 
Then $\CoarseLCE_6(2,11)$ in $w$ is reduced to  $\LCE(1,12)$ in $\code(w)$.\label{fig:coarse} }
\end{figure}
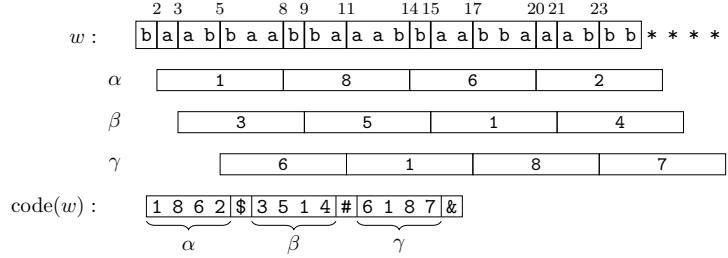

\begin{theorem}
A sequence of $\Oh(n)$ LCE queries for a string over a general ordered alphabet can be executed on-line in total time
$\Oh(n\log \log n)$ making only $\Oh(n)$ symbol comparisons.
\end{theorem}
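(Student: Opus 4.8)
The plan is to combine the generic reduction from \cref{lem:generic} with the fast implementations of the two restricted query types, choosing the threshold $t$ to balance the costs. Concretely, I would set $t = \Theta(\log^2 n)$, so that $t$ is a power of two of order $\log^2 n$ and in particular $t = \log^{\Omega(1)} n$, which is exactly the regime in which \cref{lem:short} gives $\Oh(\log t) = \Oh(\log\log n)$ amortized time per $\ShortLCE_t$ query, and in which \cref{lem:coarse} applies (it requires $t = \Omega(\log^2 n)$). With this choice, first I would invoke \cref{lem:coarse} to preprocess $w$ in $\Oh(n\log t) = \Oh(n\log\log n)$ time so that every $\CoarseLCE_t$ query is answered in $\Oh(1)$ time; note this preprocessing itself is carried out using $\Oh(n)$ short queries (via \cref{lem:rename}), so it makes only $\Oh(n)$ symbol comparisons.

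Next I would bound the total cost of the $\ShortLCE_t$ queries. By \cref{lem:short}, a sequence of $m$ such queries runs in $\Oh(mk + n\sqrt{k}\log^* n)$ time where $k = \log t = \Theta(\log\log n)$. For $m = \Oh(n)$ this is $\Oh(n\log\log n + n\sqrt{\log\log n}\,\log^* n) = \Oh(n\log\log n)$, since $\log^* n = o(\sqrt{\log\log n})$ for large $n$ and hence $\sqrt{\log\log n}\,\log^* n = o(\log\log n)\cdot\sqrt{\log\log n}$ is dominated by $\log\log n$; more crudely, $\log^* n = \Oh(\log\log n)$ and $\sqrt{\log\log n} = \Oh(\log\log n)$, and one of the two factors can be absorbed. (I would state the clean inequality $\sqrt{k}\log^* n = \Oh(k)$ for $k = \Theta(\log\log n)$ and $n$ large, since $\log^* n = \Oh(\sqrt{\log\log n})$.) I should also check the symbol-comparison count: every symbol comparison performed inside \cref{alg:short}/\cref{alg:spec} happens only at the base case $k = k'$, immediately before a $\Union_{k'}$ (except once per top-level call), and each such $\Union$ is charged to the discovery of a previously-unknown equality of $2^{k'}$-blocks among the $\Oh(n/2^{k'/2})$ sampled positions; hence the number of base-case invocations, and thus of symbol comparisons, is $\Oh(n + q)$ for $q$ short queries, which is $\Oh(n)$ here.

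Finally I would assemble the pieces through \cref{lem:generic}: each of the $\Oh(n)$ top-level $\LCE$ queries is reduced to $\Oh(1)$ $\ShortLCE_t$ queries and $\Oh(1)$ $\CoarseLCE_t$ queries plus $\Oh(1)$ extra work, so the whole algorithm performs $\Oh(n)$ short queries and $\Oh(n)$ coarse queries. Plugging $T(n, \Oh(n)) = \Oh(n\log\log n) + \Oh(n\cdot 1) = \Oh(n\log\log n)$ into \cref{lem:generic} yields a total running time of $\Oh(n\log\log n) + \Oh(n) = \Oh(n\log\log n)$, and the symbol-comparison count adds up to $\Oh(n)$ across preprocessing, short queries, and the $\Oh(n)$ overhead from \cref{lem:generic}. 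The main obstacle — though it is more bookkeeping than a genuine difficulty — is making sure the parameters line up across the three lemmas simultaneously: $t$ must be a power of two, must be $\Omega(\log^2 n)$ for \cref{lem:coarse}, must be $\log^{\Omega(1)} n$ for the clean $\Oh(\log t)$ bound in \cref{lem:short}, and the nested auxiliary parameter $2^{k'} = \Theta(k)$ from \cref{alg:lce_version_2} must itself be a valid power of two; taking $t$ to be the smallest power of two that is $\ge \log^2 n$ handles all of these at once. A secondary point of care is verifying that the $\Oh(n\log^* n)$ term promised in the abstract's general statement collapses here: it originates from \cref{lem:ackermann} applied inside \cref{lem:recursive2}, and with $q = \Oh(n)$ and the $2^{k'/2}$ denominator it contributes only $\Oh(\frac{nk}{\sqrt{k}}\log^* n) = \Oh(n\sqrt{k}\log^* n) = \Oh(n\log\log n)$, so it is subsumed.
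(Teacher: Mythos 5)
Your approach is identical to the paper's up to the point where you argue about symbol comparisons, and it is exactly there that the argument breaks down. Each base-case invocation in Algorithm~\ref{alg:spec}, and the first line of Algorithm~\ref{alg:lce_version_2}, is a \emph{naive} computation of $\ShortLCE_{2^{k'}}(i,j)$, which compares $w[i+p]$ with $w[j+p]$ for $p=0,1,\ldots$ and hence performs up to $2^{k'}$ character comparisons, not one. You correctly bound the number of base-case invocations by $\Oh(n+q)$ (at most one mismatching invocation per top-level query, plus $\Oh(n/2^{k'/2})$ matching ones, each charged to a $\Union_{k'}$ on a sampled position), but then write ``and thus of symbol comparisons, is $\Oh(n+q)$''; this does not follow. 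With $2^{k'}=\Theta(k)=\Theta(\log\log n)$, the number of symbol comparisons is $\Oh((n+q)\log\log n)=\Oh(n\log\log n)$, not $\Oh(n)$.

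The paper itself acknowledges this: it explicitly says that the algorithm as described ``might be $\Omega(n\log\log n)$'' in terms of comparisons, and then adds an ingredient your proposal omits — yet another $\Union$-$\Find$ data structure over text positions, maintaining equivalence classes of positions already known to store the same character. Before naively comparing $w[i+p]$ and $w[j+p]$ one first checks whether they lie in the same class; if so, no comparison is made, and if an actual comparison reveals equality, the two classes are merged. Since merges can happen only $n-1$ times, the total number of genuine character comparisons across all naive scans drops to $\Oh(n+q)$. (This is the idea attributed to Kosolobov.) Without this extra structure your argument establishes the $\Oh(n\log\log n)$ running time and the $\Oh(1)$ queries-per-$\LCE$ accounting correctly, but does not yield the $\Oh(n)$ comparison bound claimed by the theorem.
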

\begin{proof}

We set $t=\Theta(\log ^2 n)$ and reduce each LCE query to constant number of
$\CoarseLCE_t$ queries and $\ShortLCE_t$ queries as described in \cref{lem:generic}.
Thus together with \cref{lem:short} and \cref{lem:coarse} we obtain that
any sequence of $q$ $\LCE$ queries for a string over a general ordered alphabet
can be realized in $\Oh(n\log\log n)$ time. However, the total number of symbol
comparisons used by the algorithm might be $\Omega(n\log\log n)$. This can
be decreased to $\Oh(n)$ with yet another Union-Find data structure, where we
maintain sets of positions already known to store the same letter. 
This is essentially the idea used in Lemma 7 of~\cite{DBLP:conf/stacs/Kosolobov15}.
\end{proof}

\section{Faster solution for sublinear number of queries}

The algorithm presented in the previous section is not efficient when the number
of queries $q$ is significantly smaller than the length of the string $n$. 
In this section we show that this can be avoided, and we present an $\Oh(q\log\log n+n\log^*n)$-time algorithm.
This requires some nontrivial changes in our approach. 
In particular, we need a stronger notion of $t$-covers, which form a \emph{monotone family}.

$\S(4^0),\S(4^1),\S(4^2),\ldots\subseteq [1,n]$ is a monotone family of covers if the
following conditions hold for every $k$:
\begin{enumerate}
\item $\S(4^k)$ is a $4^k$-cover (except that $h_{4^k}$ is computable in $\Oh(k)$ instead of constant time).
\item $\S(4^{k+1}) \subseteq \S(4^{k})$.
\item For any $i,j\in\S(4^k)$ we have that $h_{4^{k+1}}(i,j)\in\{0,4^k,2\cdot 4^k\}$, and furthermore
for such arguments $h_{4^{k+1}}$ can be evaluated in constant time.
\item $|\S(4^k)| \leq (\frac{3}{4})^k n$.
\end{enumerate}

The existence of such a family is not completely trivial, in particular plugging in the standard
construction of $\S(4^k)$ from \cref{lem:difference-cover} does not guarantee that
$\S(4^{k+1}) \subseteq \S(4^{k})$.
The following lemma, implicitly shown in \cite{ourWADS}, provides an efficient construction.

\begin{lemma}[Gawrychowski et al.~\cite{ourWADS}, Section~4.1]
Let $\S(4^k)$ be the set of non-negative integers $i\in [1,n]$ such that none of the
$k$ least significant digits of the base-$4$ representation of $i$ is zero.
Then $\S(4^0),\S(4^1),\S(4^2),\ldots$ is a monotone family of covers, which can be
constructed in $\Oh(n)$ total time.
\end{lemma}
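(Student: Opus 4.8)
The plan is to verify the four listed conditions directly from a digit-based description of the sets, with essentially all the substance in conditions~1 and~3. The key observation is that the $k$ least significant base-$4$ digits of a position $i$ are exactly the (zero-padded, length-$k$) base-$4$ digits of $i\bmod 4^k$, so $\S(4^k)=\{i\in[1,n] : (i\bmod 4^k)\in D_k\}$, where $D_k\subseteq[0..4^k-1]$ is the set of residues all of whose $k$ padded base-$4$ digits lie in $\{1,2,3\}$; in particular $|D_k|=3^k$. From this picture the monotonicity condition ($\S(4^{k+1})\subseteq\S(4^k)$) is immediate, since having $k+1$ nonzero low digits forces having $k$ nonzero low digits. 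The size bound is a routine residue count: each residue class modulo $4^k$ is met at most $\lceil n/4^k\rceil$ times by $[1,n]$, so $|\S(4^k)|\le 3^k\lceil n/4^k\rceil$, which is $(3/4)^k n$ once we assume, as we may by padding the text with dummy symbols, that $n$ is a power of $4$. Finally, $i$ and $i+4^k$ share their $k$ lowest base-$4$ digits, so each $\S(4^k)$ is $4^k$-periodic.

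The heart of the argument is condition~1, for which it remains to construct $h=h_{4^k}$. I will show that $D_k$ is a $4^k$-difference cover, i.e.\ $\{(x-y)\bmod 4^k : x,y\in D_k\}=[0..4^k-1]$. Granting this, for $1\le i,j\le n-4^k$ I set $c=(i-j)\bmod 4^k$, pick $x,y\in D_k$ with $(x-y)\bmod 4^k=c$, and return $h(i,j)=(x-(i\bmod 4^k))\bmod 4^k$; then the $k$ low digits of $i+h(i,j)$ and of $j+h(i,j)$ are $x$ and $y$ respectively, both in $D_k$, so both positions lie in $\S(4^k)$, and clearly $0\le h(i,j)\le 4^k$ and $i+h(i,j),j+h(i,j)\le n$. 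To prove the difference-cover property I build $y$ and $x=(y+c)\bmod 4^k$ simultaneously, digit by digit from position $0$ upward, maintaining a carry bit $\gamma\in\{0,1\}$: at position $r$ put $d=c_r+\gamma\in\{0,\dots,4\}$ and observe, by a short case check, that for every such $d$ one can choose $y_r\in\{1,2,3\}$ with $(y_r+d)\bmod 4\in\{1,2,3\}$, which keeps the outgoing carry $\lfloor (y_r+d)/4\rfloor$ in $\{0,1\}$. Discarding the final carry (we work mod $4^k$) leaves $x,y\in D_k$ with $x-y\equiv c\pmod{4^k}$. This procedure runs in $\Oh(k)$ time, matching the running-time budget the definition allows for $h_{4^k}$.

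Condition~3 reduces to a finite check. For $i,j\in\S(4^k)$, adding $\Delta=m\cdot 4^k$ with $m\in\{0,1,2\}$ leaves the low $k$ digits (which are in $D_k$) untouched and changes only the digit at position $k$, turning $i_k:=\lfloor i/4^k\rfloor\bmod 4$ into $(i_k+m)\bmod 4$ (a carry into higher positions is harmless). Hence $i+\Delta\in\S(4^{k+1})$ iff $(i_k+m)\bmod 4\neq 0$. For each fixed $i_k$ there is at most one ``bad'' $m\in\{0,1,2\}$ with $i_k+m\equiv 0\pmod 4$, so for a pair $(i_k,j_k)$ at most two of the three candidates are excluded and a valid $m$ exists; given the precomputed power $4^k$, extracting $i_k,j_k$ and picking a good $m$ is $\Oh(1)$, and $h_{4^{k+1}}(i,j)=m\cdot 4^k\in\{0,4^k,2\cdot 4^k\}$ (with $i+\Delta,j+\Delta\le n$ whenever $i,j\le n-4^{k+1}$, as $\Delta<4^{k+1}$). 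One point needing care is that the single function $h_{4^{k+1}}$ must be both the general $\Oh(k)$-time map of condition~1 and this $\Oh(1)$-time rule on $\S(4^k)$; I reconcile this by having $h_{4^{k+1}}$ first test whether both arguments lie in $\S(4^k)$ and branch accordingly, the test being made $\Oh(1)$ by precomputing, for every position, the length of the maximal run of nonzero base-$4$ digits starting at its least significant digit.

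For the $\Oh(n)$ construction we precompute the $\Oh(\log n)$ powers $4^0,4^1,\dots$ up to $n$ and, by the obvious left-to-right recurrence ($r(i)=0$ if $4\mid i$, else $r(i)=1+r(\lfloor i/4\rfloor)$), the array $r(\cdot)$ of trailing-nonzero-run lengths in $\Oh(n)$ time; then $i\in\S(4^k)$ iff $r(i)\ge k$. If the algorithm additionally needs the sets listed explicitly, it gets them nested via monotonicity, and $\sum_k|\S(4^k)|\le\sum_k(3/4)^k n=\Oh(n)$ bounds the total work. I expect the difference-cover claim underlying condition~1 -- and making the one function $h_{4^{k+1}}$ work in both regimes -- to be the only genuinely delicate part; conditions~2 and~4, the remainder of condition~3, and the timing are routine once the digit picture is in place.
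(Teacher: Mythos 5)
The paper does not contain its own proof of this statement: it is cited from \cite{ourWADS} (``implicitly shown in \cite{ourWADS}''), so there is nothing in this paper to compare against. Judged on its own merits, your argument is correct and supplies the missing details cleanly.

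A few remarks on the substance. The reformulation $\S(4^k)=\{i\in[1,n]:i\bmod 4^k\in D_k\}$ with $D_k$ the residues having all $k$ low base-$4$ digits in $\{1,2,3\}$ is exactly the right picture, and conditions~2 and~4 do fall out immediately (the padding to a power of $4$ is needed, as you note, to absorb the $+3^k$ slack from $\lceil n/4^k\rceil$; without it one gets $(3/4)^k n + 3^k$, which is still $\Oh((3/4)^k n)$ for $4^k\le n$, the only regime the algorithm uses). Your digit-by-digit proof that $D_k$ is a $4^k$-difference cover is the genuine content of condition~1; the carry invariant $\gamma\in\{0,1\}$ and the case check that every $d=c_r+\gamma\in\{0,\dots,4\}$ admits a $y_r\in\{1,2,3\}$ with $(y_r+d)\bmod 4\neq 0$ are both correct. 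The derivation $(i+h)\bmod 4^k=x$, $(j+h)\bmod 4^k=y$ with $h=(x-(i\bmod 4^k))\bmod 4^k<4^k$ then verifies the cover property with $h$ computable in $\Oh(k)$ time. For condition~3, the observation that $\Delta=m\cdot 4^k$ with $m\in\{0,1,2\}$ only perturbs digit $k$ to $(i_k+m)\bmod 4$ (carries into position $\ge k+1$ being irrelevant), and that each of $i_k,j_k$ forbids at most one of the three candidate $m$'s, is exactly right. Your reconciliation of the two regimes for $h_{4^{k+1}}$---branching on an $\Oh(1)$ membership test in $\S(4^k)$ via a precomputed array of trailing-nonzero-digit run lengths $r(\cdot)$---is a clean way to satisfy both condition~1 (applied at level $k+1$) and condition~3 simultaneously, and the $\Oh(n)$ construction via the recurrence $r(i)=0$ if $4\mid i$, else $1+r(\lfloor i/4\rfloor)$, together with $\sum_k|\S(4^k)|=\Oh(n)$, covers the stated preprocessing bound.
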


\subsection{$\ShortLCE_t$ queries with monotone family of covers}\label{sec:faster}

Similarly as in the proof of \cref{lem:short}, we reduce $\ShortLCE$ queries to $\SpecLCE$ queries.
However, now we slightly change the definition of $\SpecLCE$ queries so that there is only one parameter as follows:
$$ \SpecLCE_{t}(i,j)\;=\;
  \begin{cases}
   \ShortLCE_t(i,j)  & \quad \text{if } i,j\in \S(t)\\
    \bot & \quad \text{otherwise } \\
  \end{cases}
$$

\begin{lemma}
\label{lem:spec2}
Consider a sequence of $q$ $\SpecLCE_{4^{k_i}}$ queries for $i\in \{1,\ldots,q\}$. 
The queries can be answered online in $\Oh((n + s)\cdot\alpha(n + s,n))$ time
where $s=\sum_{i=1}^q T_i$ with $T_i = 1$ if the $i$-th query returns $4^{k_i}$ and $T_i = k_i+1$ otherwise.
\end{lemma}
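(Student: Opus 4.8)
The plan is to imitate the recursive algorithm $\SpecLCE_{2^k,2^{k'}}$ from \cref{alg:spec}, but now using the monotone family $\S(4^0)\subseteq\S(4^1)\subseteq\cdots$ in place of a single auxiliary cover, so that a query $\SpecLCE_{4^k}(i,j)$ with $i,j\in\S(4^k)$ recurses on two half-length queries at level $k-1$ — and, crucially, property~3 of the monotone family guarantees that the shifts $i+4^{k-1},\,j+4^{k-1}$ produced by the recursion still lie in $\S(4^{k-1})$. Concretely, I would maintain, for every level $k$ that actually occurs, a Union-Find structure $\mathrm{UF}_k$ on $\S(4^k)$ recording pairs of positions already known to start equal $4^k$-blocks. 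A call $\SpecLCE_{4^k}(i,j)$ first does $\Find_k(i)=\Find_k(j)$; if equal it returns $4^k$; if $k=0$ it compares $w[i]$ and $w[j]$; otherwise it calls $\SpecLCE_{4^{k-1}}(i,j)$ and, if that returns $4^{k-1}$, also $\SpecLCE_{4^{k-1}}(i+4^{k-1},j+4^{k-1})$, and if both halves return $4^{k-1}$ it performs $\Union_k(i,j)$. (A minor twist: since $\S(4^{k+1})\subseteq\S(4^k)$ rather than equality, the arguments $i,j$ of a level-$k$ query need not lie in $\S(4^{k-1})$ themselves; but $h_{4^{k-1}}$ can shift them by at most $4^{k-1}$ into $\S(4^{k-1})$ before recursing, exactly as in \cref{alg:lce_version_2}, and property~3 keeps the later shifts inside the cover. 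For the purposes of this lemma I treat the $\SpecLCE_{4^{k_i}}$ queries as already having $i,j\in\S(4^{k_i})$, so this is only needed internally.)

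The counting argument is the one from \cref{lem:recursive}, carried over verbatim with $2$ replaced by $4$. By induction on $k$ one shows that a call to $\SpecLCE_{4^k}(i,j)$ triggers at most $1+2\,\unioncount$ recursive calls if $w[i..i+4^k-1]=w[j..j+4^k-1]$, and at most $O(k)+2\,\unioncount$ recursive calls otherwise, where $\unioncount$ is the number of $\Union$ operations performed inside this call (the "$O(k)$" arises because a mismatch can propagate down through $O(k)$ levels, each contributing a bounded number of fruitless calls; the matching case is cheap because it ends in a $\Union$ that "pays" for the subtree). Summing over the $q$ top-level queries: a matching query ($T_i=1$) costs $O(1)$ plus its $\Union$s, a mismatching query ($T_i=k_i+1$) costs $O(k_i)$ plus its $\Union$s, so the total number of recursive calls — hence of $\Find$ operations — is $O\!\big(n+\sum_i T_i\big)=O(n+s)$, since the total number of $\Union$ operations is $O(n)$. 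Here the bound on $\Union$s is where property~4, $|\S(4^k)|\le(\tfrac34)^k n$, enters: a $\Union_k$ merges two classes in $\S(4^k)$, so there are at most $|\S(4^k)|\le(\tfrac34)^k n$ of them per level, and $\sum_k (\tfrac34)^k n=O(n)$ over all levels. (Contrast with \cref{lem:recursive}, where $\S(2^k)$ had size $\Theta(n)$ at every one of the $\Theta(\log n)$ levels, giving the $O(nk)$ bound.)

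Finally I invoke \cref{lem:uf}: a level-$k$ Union-Find structure over $\le(\tfrac34)^k n$ elements, subjected to $u_k$ unions and $f_k$ finds, runs in $O\!\big((\tfrac34)^k n + f_k\cdot\alpha(f_k+(\tfrac34)^k n,(\tfrac34)^k n)\big)$ time; bounding every $\alpha$ by $\alpha(n+s,n)$ (valid since $\alpha$ is nonincreasing in its second argument up to constants and all the element counts are $\le n$ while $f_k+u_k = O(n+s)$ in total) and summing over $k$ gives $O\!\big((n+s)\cdot\alpha(n+s,n)\big)$ total time, which is the claimed bound. The one point that needs genuine care, and which I expect to be the main obstacle, is making the "$\alpha$ bound is uniform across levels" step rigorous: one must check that plugging the per-level operation counts into \cref{lem:uf} and then crudely replacing each $\alpha(\cdot,\cdot)$ by $\alpha(n+s,n)$ does not lose more than a constant factor — i.e. that $\sum_k f_k\,\alpha(f_k+(\tfrac34)^k n,(\tfrac34)^k n)\le \alpha(n+s,n)\sum_k f_k + O(n)$ — using monotonicity of $\alpha$ and the fact that the denominators $(\tfrac34)^k n$ range down to $\Theta(1)$, where one should fall back on the trivial $O(1)$-amortized bound for tiny structures rather than the $\alpha$ bound. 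Everything else is a direct transcription of the $2$-level argument to the $4$-level monotone family.
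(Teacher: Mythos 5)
There is a genuine error in the recursion: you replace the factor-$2$ algorithm by two calls to $\SpecLCE_{4^{k-1}}$ (at offsets $0$ and $4^{k-1}$), but $4^k=4\cdot 4^{k-1}$, so two quarter-length calls cover only the first $2\cdot 4^{k-1}=4^k/2$ positions of the block. In particular, performing $\Union_k(i,j)$ after both calls return $4^{k-1}$ is unsound: you have only verified $\LCE(i,j)\ge 4^k/2$, not $\LCE(i,j)\ge 4^k$, so the next time $\Find_k(i)=\Find_k(j)$ holds you may return $4^k$ incorrectly. The paper's Algorithm~\ref{alg:spec_lce_4k} makes \emph{four} recursive calls $\SpecLCE_{4^{k-1}}(i+p\cdot 4^{k-1},\,j+p\cdot 4^{k-1})$ for $p=0,1,2,3$, breaking out early on a mismatch; only after all four return $4^{k-1}$ is $\Union_k(i,j)$ performed.

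Your justification for why the shifted arguments stay inside the cover is also off. Inside $\SpecLCE$ no shift via $h$ is needed at all: since $i,j\in\S(4^k)\subseteq\S(4^{k-1})$ (property 2) and $\S(4^{k-1})$ is $4^{k-1}$-periodic (property 1), the positions $i+p\cdot 4^{k-1}$ and $j+p\cdot 4^{k-1}$ are automatically in $\S(4^{k-1})$. Property~3 (the restricted range of $h_{4^{k+1}}$ on pairs from $\S(4^k)$) is used in the outer $\ShortLCE_{4^k}$ wrapper (Algorithm~\ref{alg:short_lce_4k}), not inside $\SpecLCE$, and invoking $h_{4^{k-1}}$ inside the recursion as you suggest would break the amortization. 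Once the recursion is corrected to four-way, the rest of your argument — per-level Union-Find, $\Union$ count bounded by $\sum_k |\S(4^k)|\le\sum_k(\tfrac34)^k n=\Oh(n)$ via property~4, amortized $\Oh(1)$ versus $\Oh(k+1)$ $\Find$s depending on whether the answer is $4^k$, and the final appeal to Lemma~\ref{lem:uf} — matches the paper's proof, and your concern about uniformity of the $\alpha$ bound across levels is handled simply by viewing the per-level structures as one disjoint-sets instance on $\Oh(n)$ elements with $\Oh(n)$ unions and $\Oh(n+s)$ finds.
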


\begin{proof}
We maintain a separate $\Union$-$\Find$ structure for $\S(4^k)$ at every level $k\in  \{0,\ldots,K\}$ where $K=\max_{i=1}^q k_i$.
To answer a query for $\SpecLCE_{4^{k}}$, we check if $\Find_{k}(i)=\Find_{k}(j)$ and if so, return $4^{k}$. 
Otherwise, we calculate the answer with at most four calls to $\SpecLCE_{4^{{k-1}}}$.
This is possible because $\S(4^{k}) \subseteq \S(4^{k-1})$ and $\S(4^{k-1})$ is $4^{k-1}$-periodic.
Finally, we call $\Union_{k}(i,j)$ if the answer is $4^{k}$; see \cref{alg:spec_lce_4k}.

\begin{algorithm}[h]
    \caption{$\SpecLCE_{4^k}(i,j)$: compute $\min(\LCE(i,j),4^k)$ for $i,j\in \S(4^k)$\label{alg:spec_lce_4k}}
    \lIf{$\Find_k(i)=\Find_k(j)$}{%
     \KwRet{$4^k$}
    }
    \vskip 0.15cm
    \eIf{$k=0$}{
      \leIf{$w[i]=w[j]$}{$\ell=1$}{$\ell=0$}
    }{
     $\ell = 0$ \;
     \For{$p=0$ \KwSty{to} $3$}{
       $\ell = \ell + \SpecLCE_{4^{k-1}}(i+p\cdot 4^{k-1},j+p\cdot 4^{k-1})$ \;
       \lIf{$\ell < (p+1) \cdot 4^{k-1}$}{%
         \KwSty{break}
       }
     }
    }
    \vskip 0.15cm
    \lIf{$\ell=4^k$}{%
      $\Union_k(i,j)$
    }
    \vskip 0.15cm
    \KwRet{$\ell$}
    \vskip 0.1cm
    \end{algorithm}

We again analyze the number of recursive calls to $\SpecLCE_{4^{k}}$ counting $\Union$ operations.
The total number of unions at level $k$ is $|\S(4^k)|\le (\frac34)^k$, and in total this sums up to $\Oh(n)$.
The amortized number of $\Find$ queries executed by a call to $\SpecLCE_{4^{k}}$ is constant if $\LCE(i,j)=4^{k}$ and $\Oh(k+1)$ otherwise.
These values also bound the running time of the remaining operations.
Hence, by \cref{lem:uf}, the total time is as claimed.
\end{proof}

    \begin{figure}[ht]
    \centering
    \renewcommand{\arraystretch}{1.2}
    \begin{tabular}{cl}
    $k'$ & $\SpecLCE$ calls \\ \hline
    0 & $\SpecLCE_{4^0}(1013\mathbf{0}_4, 00101_4) \rightarrow \Delta=00001_4$ \\
    1 & $\SpecLCE_{4^1}(10131_4, 001\mathbf{0}2_4) \rightarrow \Delta=00011_4$ \\
    1 & $\SpecLCE_{4^1}(102\mathbf{0}1_4, 00112_4) \rightarrow \Delta=00021_4$ \\
    3 & $\SpecLCE_{4^3}(1\mathbf{0}211_4, 0\mathbf{0}122_4) \rightarrow \Delta=01021_4$ \\
    {\bf return} call & $\SpecLCE_{4^4}(11211_4, 01122_4)$ \\
    \end{tabular}

    \caption{An execution of $\ShortLCE_{4^4}(i=(10130)_4, j=(00101)_4)$ (assuming $\LCE(i,j)>4^4$).
    The numbers are given in base-4 representation. Note that there is no $\SpecLCE_{4^2}$ call.}
    \end{figure}

\begin{lemma}
\label{lem:short2}
A sequence of $q$ queries $\ShortLCE_{4^{k_i}}$ for $i\in \{1,\ldots,q\}$ can be answered online in total time 
$\Oh((n+s)\cdot\alpha(n+s,n))=\Oh(n\log^*n+s)$
where $s=\sum_{i=1}^{q}(k_i+1)$.
\end{lemma}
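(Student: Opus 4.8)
The plan is to reduce $\ShortLCE_{4^{k_i}}$ queries to $\SpecLCE_{4^{k_i}}$ queries in the spirit of \cref{alg:lce_version_2}, using the monotone family of covers. Given a query $\ShortLCE_{4^k}(i,j)$, I would first shift both positions into $\S(4^k)\subseteq\S(4^{k-1})\subseteq\cdots\subseteq\S(4^0)$ by applying the cover's offset function: the displacement $\Delta$ needed to bring $i$ and $j$ simultaneously into $\S(4^k)$ is at most $4^k$, and since $h_{4^{k+1}}$ maps into $\{0,4^k,2\cdot 4^k\}$ in constant time on arguments already in $\S(4^k)$, I can compute $\Delta$ incrementally level by level in $\Oh(k)$ time (this is exactly the staircase of $\Delta$'s shown in the preceding figure). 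I would then compute naively the short LCE on the initial segment of length $\Delta$ by direct symbol comparisons; if a mismatch is found there, I return that value. Otherwise $i+\Delta,j+\Delta\in\S(4^k)$ and I invoke $\SpecLCE_{4^k}(i+\Delta,j+\Delta)$, returning $\min(\Delta+\text{(that answer)},4^k)$.

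Next I would account for the cost. The overhead of locating $\Delta$ is $\Oh(k_i+1)$ per query, summing to $\Oh(s)$. For the naive comparisons: each query does $\Delta\le 4^{k_i}$ comparisons, which is far too many to charge directly. The key observation — analogous to the argument in \cref{lem:recursive2} — is that we should only do naive comparisons one symbol at a time, interleaved with consulting a level-$0$ Union-Find structure: before comparing $w[i']$ and $w[j']$ I check whether $\Find_0(i')=\Find_0(j')$, and after a successful comparison I call $\Union_0(i',j')$. Thus each naive comparison either discovers a new equality of single symbols (there are at most $n$ such discoveries, contributing $\Oh(n)$ total, and this is also where the $\Oh(q+n)$ symbol-comparison bound comes from) or it is the single mismatch ending the query (at most $q\le s$ of those). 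Hence all naive work over all queries is $\Oh(n+s)$ symbol comparisons plus $\Oh(n+s)$ Union-Find operations on an $n$-element universe, which by \cref{lem:uf} and \cref{lem:ackermann} is $\Oh(n\log^*n+s)$. Adding the $\SpecLCE$ cost from \cref{lem:spec2}, namely $\Oh((n+s')\cdot\alpha(n+s',n))$ where $s'\le s$ since each $\ShortLCE_{4^{k_i}}$ call spawns one $\SpecLCE_{4^{k_i}}$ call whose $T$-weight is bounded by $k_i+1$, we obtain the claimed $\Oh((n+s)\cdot\alpha(n+s,n))=\Oh(n\log^*n+s)$ via \cref{lem:ackermann}.

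I expect the main obstacle to be the bookkeeping that lets the naive prefix comparisons be charged against the single level-$0$ Union-Find structure rather than redone each query. One must be careful that $\Delta$ itself can be as large as $4^{k_i}$, so a query whose answer is tiny could still force a long naive scan unless we always test $\Find_0$ first — the point is that after the first few LCE queries, most single-character equalities along any stretch of the text have already been merged, so subsequent scans short-circuit. A secondary subtlety is ensuring the shifted positions $i+\Delta$, $j+\Delta$ (and the intermediate positions $i+p\cdot4^{k-1}$ used inside $\SpecLCE$) do not run past position $n-t$; this is handled, as elsewhere in the paper, by padding $w$ with dummy symbols so that every relevant fragment is defined, and the padding does not affect correctness of $\min(\LCE,4^k)$ since we cap the answer at $4^k$ anyway. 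The final equality $\Oh((n+s)\cdot\alpha(n+s,n))=\Oh(n\log^*n+s)$ is then immediate from \cref{lem:ackermann} applied with $m=s$.
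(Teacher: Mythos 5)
Your high-level reduction is right — climb the staircase of covers to compute $\Delta$ in $\Oh(k)$ time, then invoke $\SpecLCE_{4^k}$ — and your use of the monotone-family property $h_{4^{k'+1}}\in\{0,4^{k'},2\cdot 4^{k'}\}$ for the climb matches the paper. But the way you handle the prefix of length $\Delta$ has a genuine gap that breaks the stated bound.

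You bound only the number of $\Union_0$ calls (new single-character equalities, at most $n$) and the number of terminating mismatches (at most one per query), and from this you conclude that all naive work is $\Oh(n+s)$. That accounting omits the $\Find_0$ calls that return ``already equal.'' Your scan issues one $\Find_0$ per scanned position, and there can be up to $\Delta$ scanned positions with no Union at all. Since $\Delta$ can be $\Theta(4^{k_i})$ (at most two steps of size $4^{k'}$ per level, summing to $\Theta(4^{k_i})$), a single query on a previously examined region costs $\Theta(4^{k_i})$ time even though it contributes only $\Oh(k_i+1)$ to $s$. Concretely, repeat $\ShortLCE_{4^k}(1,2)$ $q$ times on $w=a^n$ with $k=\Theta(\log n)$: the second and later queries each perform $\Theta(n)$ $\Find_0$ calls and zero Unions, giving $\Theta(qn)$ total time against a claimed $\Oh(n\log^*n + q\log n)$. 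The ``short-circuit'' you invoke does not happen at level $0$: knowing $w[i']=w[j']$ lets you skip the character comparison, but not the $\Find$, and there is no way to advance by more than one position.

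This is precisely why the paper never separates the computation of $\Delta$ from its verification. In \cref{alg:short_lce_4k}, each advance of $\Delta$ by $4^{k'}$ is performed via $\SpecLCE_{4^{k'}}(i+\Delta,j+\Delta)$, whose first operation is a $\Find_{k'}$ in the level-$k'$ Union-Find over $\S(4^{k'})$. When the two $4^{k'}$-blocks are already known equal, the call returns in $\Oh(1)$ amortized time, so each of the at most $2k$ staircase advances costs $\Oh(1)$ amortized and the whole query costs $\Oh(k+1)$. Maintaining a per-level Union-Find is exactly what lets one skip ahead by $4^{k'}$ in one step; collapsing to a single level-$0$ structure destroys that. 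Your treatment of the final $\SpecLCE_{4^k}$ call, the $s'\le s$ bound, and the appeal to \cref{lem:ackermann} are fine — the gap is confined to the prefix accounting.
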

    
\begin{proof}
We calculate $\ShortLCE_{4^k}(i,j)$ using $\Oh(k)$ $\SpecLCE$ queries; see \cref{alg:short_lce_4k}.
We iterate through $k'=0,1,\ldots,k-1$ maintaining $\Delta$ such that $0\le\Delta \le \LCE(i,j)$ and $i+\Delta, j+\Delta\in \S(4^{k'})$.
Before incrementing $k'$, we keep increasing $\Delta$ by $4^{k'}$ until $i+\Delta, j+\Delta\in \S(4^{k'})$
or $\Delta > \LCE(i,j)$.
The latter condition is checked by calling $\SpecLCE_{4^{k'}}(i+\Delta,j+\Delta)$
and terminating if it returns less than $4^{k'}$. 
The while loop iterates at most twice, because $h_{4^{k'+1}}\in \{0,4^{k'},2\cdot 4^{k'} \}$. 
Eventually, we either terminate having found the answer, or we can obtain it with a single call to $\SpecLCE_{4^k}(i+\Delta,j+\Delta)$.

        \begin{algorithm}[ht]
    \caption{$\ShortLCE_{4^k}(i,j)$\label{alg:short_lce_4k}}
    $\ell = \Delta = 0$\;
    \For{$k'=0$ \KwSty{to} $k-1$}{
      \While{$i+\Delta\not\in \S(4^{k'+1})\ \mbox{\bf or}\ j+\Delta\not\in \S(4^{k'+1})$}{
        $\ell = \ell + \SpecLCE_{4^{k'}}(i+\Delta,j+\Delta)$ \Comment*[r]{$i+\Delta,j+\Delta \in \S(4^{k'})$}
        $\Delta = \Delta + 4^{k'}$ \;
        \lIf{$\ell < \Delta$}{%
          \KwRet{$\min(4^k, \ell)$}
        }
      }
    }
    \KwRet{$\min(4^k, \Delta + \SpecLCE_{4^k}(i+\Delta,j+\Delta))$}
      \Comment*[r]{$i+\Delta,j+\Delta \in \S(4^k)$}
    \end{algorithm}

Let us analyze the total time complexity.
Each call to $\ShortLCE_{4^k}$ performs up to $k$ $\SpecLCE_{4^{k'}}$ queries,
but we terminate as soon as we obtain an answer other than~$4^{k'}$.
In \cref{lem:spec2}, the last of these queries contributes $\Oh(k'+1)=\Oh(k+1)$ to $s$,
while the remaining queries contribute one each. 
The total contribution of all $\SpecLCE_{4^{k'}}$ queries called by a single $\ShortLCE_{4^k}$ query 
is therefore $\Oh(k+1)$. Hence, the total running time consumed by all $\SpecLCE_{4^{k'}}$
queries is $\Oh((n+s)\cdot \alpha(n+s,n))$ where $s=\Oh(\sum_{i=1}^q (k_i+1))$. 
It is not hard to see that the remaining time consumed by a single $\ShortLCE_{4^k}$ query is $\Oh(k+1)$.
This is partly because checking whether $i+\Delta$ and $j+\Delta$ belong to $\S(4^{k'+1})$ takes constant time,
since we know that these indices are in $\S(4^{k'})$. 
Over all queries this sums up to $\Oh(s)$, which is dominated by the running time of the $\SpecLCE_{4^{k'}}$ queries.
The $\Oh(n\log^*n + s)$ upper bound follows from \cref{lem:ackermann}.
\end{proof}

\subsection{Final algorithm}

We first modify the implementation details for $\CoarseLCE$ to reduce the preprocessing time. 
\begin{lemma}
\label{lem:fastercoarse}
For $t=\Omega(\log^6 n)$ we can preprocess a string of length $n$ in $\Oh(n\log^*n)$ time, 
so that each $\CoarseLCE_t$ query can be answered in constant time.
\end{lemma}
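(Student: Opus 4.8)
The plan is to reuse the four-step recipe behind \cref{lem:coarse} — sort all $t$-blocks, rank them, assemble a word $\code(w)$, and preprocess $\code(w)$ for $\LCE$ — but to answer the $\ShortLCE$ queries that drive the sorting phase through \cref{lem:short2} on top of the monotone family of covers, rather than through \cref{lem:short}. The gain is that once $t$ is as large as $\log^6 n$, the number of $t$-blocks drops to $\Oh(n/\polylog n)$, and because the running time in \cref{lem:short2} has no $\sqrt{k}$ factor, the entire sorting phase then costs only $\Oh(n\log^* n)$.

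Concretely, I would fix $t=4^K$, so $K=\Theta(\log\log n)$, and take $\S(t)=\S(4^K)$ from the monotone family: it is $4^K$-periodic, the whole family is built in $\Oh(n)$ time, and $|\S(4^K)\cap[1,n]| = \Oh\big((\tfrac34)^K n\big) = \Oh(n\,t^{-c})$ where $c = 1-\log_4 3 > \tfrac15$. Thus there are $m = \Oh(n\,t^{-c})$ many $t$-blocks, and the hypothesis $t=\Omega(\log^6 n)$ forces $m = \Oh(n/\log^{6c}n) = \Oh(n/\log^{1.2}n)$. Two $t$-blocks starting at $p,q\in\S(t)$ are compared with one $\ShortLCE_{4^K}(p,q)$ query (plus a single symbol comparison at the first mismatch, using that the alphabet is ordered); hence a comparison sort orders all $t$-blocks with $\Oh(m\log m)$ such queries, all with exponent $K$. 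By \cref{lem:short2} these run in total time $\Oh(n\log^* n + s)$ with $s = \Oh(m\log m\cdot K) = o(n)$, and the $\Oh(m\log m)$ bookkeeping of the sort is $o(n)$ as well. So in $\Oh(n\log^* n)$ time we obtain the sorted list of distinct $t$-blocks and, from it, an array giving $\rank(p)\in[1..|\S(t)|]$ for every $p\in\S(t)$.

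The remaining steps are then copied almost verbatim from \cref{lem:coarse}. Writing $[1,t]\cap\S(t)=\{i_1,\dots,i_k\}$ (here $k=3^K$), I would form $z_s=\rank(i_s)\,\rank(i_s+t)\,\rank(i_s+2t)\cdots$, each of length $\Oh(n/t)$, introduce fresh separators, and set $\code(w)=z_1\#_1z_2\#_2\cdots z_k\#_k$; correctness of the reduction uses only that $\S(t)$ is $t$-periodic, which holds. Then $|\code(w)| = \Oh\big(3^K\cdot n/t\big) = \Oh\big((\tfrac34)^K n\big) = \Oh(n)$ over an alphabet of polynomial size, so a suffix array and an RMQ structure on $\code(w)$ are built in $\Oh(n)$ time. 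A $\CoarseLCE_t(p,q)$ query for $p,q\in\S(t)$ is answered in constant time exactly as in \cref{lem:coarse}: constant-time index arithmetic maps $p,q$ to their images $p',q'$ in $\code(w)$, and a single $\LCE(p',q')$ query returns the answer. This yields $\Oh(n\log^* n)$ preprocessing and $\Oh(1)$ query time.

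The step that I expect to require the most care — and the reason the threshold here is $\log^6 n$ rather than the $\log^2 n$ of \cref{lem:coarse} — is bounding the sorting cost: the monotone cover $\S(4^K)$ is only $\Oh(n\,t^{-c})$-sparse with $c\approx 0.21$, not $\Oh(n/\sqrt t)$-sparse, so I must choose $t$ large enough that $t^{c}=\omega(\log n\cdot\log\log n)$, which $\log^6 n$ comfortably achieves, and I must route the $\ShortLCE$ queries through \cref{lem:short2} rather than \cref{lem:short}: the latter carries an $\Oh(n\sqrt{k}\log^* n)=\Omega(n\sqrt{\log\log n}\,\log^* n)$ term that alone would exceed the target. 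Everything else is routine.
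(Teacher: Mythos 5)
Your proposal is correct and follows essentially the same path as the paper's own proof: set $K=\Theta(\log\log n)$ with $t=4^K$, use the monotone cover $\S(4^K)$ of size $\Oh\bigl((\tfrac34)^K n\bigr)=\Oh(n/t^{0.2})$, sort the blocks via $\ShortLCE_{4^K}$ routed through \cref{lem:short2} so the sorting cost is $\Oh(n\log^*n)$, and then reuse the $\code(w)$ construction from \cref{lem:coarse}. The paper's version is just terser, but the calculation $(\tfrac34)^K = t^{-(1-\log_4 3)} \le t^{-0.2}$ and the resulting $\Oh\bigl(\tfrac{n}{t^{0.2}}\log n\log t + n\log^*n\bigr)=\Oh(n\log^*n)$ bound match yours exactly.
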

\begin{proof}
We set $k = \ceil{\frac12 \log t}$ and lexicographically sort all $4^k$-blocks using $\ShortLCE_{4^k}$ queries.
The number of blocks is at most $(\frac{3}{4})^{k}n \leq  \frac{n}{t^{0.5 \log 0.75}}\leq \frac{n}{t^{0.2}}$. By \cref{lem:short2},
the sorting time is:
$$\Oh\left(\frac{n}{t^{0.2}}\log n\log t + n\log^*n\right) = \Oh\left(n \tfrac{\log n \log \log n}{\log^{1.2}n} + n\log^*n\right) = \Oh(n\log^*n).$$
Then we proceed as in the proof of \cref{lem:coarse}.
\end{proof}
By combining \cref{lem:fastercoarse,lem:short2}, we obtain the final theorem.

\begin{theorem}
\label{thm:final2}
A sequence of $q$ LCE queries for a
string over a general ordered alphabet can be executed on-line in total time
$\Oh(q\log \log n + n\log^*n)$ making $\Oh(q+n)$ symbol comparisons.
\end{theorem}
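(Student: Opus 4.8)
The plan is to assemble \cref{thm:final2} from the two main ingredients developed in the previous subsections, exactly as the text hints before the statement: the $\ShortLCE$ machinery of \cref{lem:short2} and the $\CoarseLCE$ preprocessing of \cref{lem:fastercoarse}, glued together by the generic reduction of \cref{lem:generic}. First I would fix the threshold at $t=\Theta(\log^6 n)$ — large enough to satisfy the $t=\Omega(\log^6 n)$ hypothesis of \cref{lem:fastercoarse}, so that after $\Oh(n\log^* n)$ preprocessing each $\CoarseLCE_t$ query costs $\Oh(1)$. By \cref{lem:generic}, answering $q$ arbitrary $\LCE$ queries then reduces, at an additive cost of $\Oh(n+q)$, to answering $\Oh(q)$ queries of types $\ShortLCE_t$ and $\CoarseLCE_t$; the $\CoarseLCE_t$ part of those contributes only $\Oh(q)$ after the preprocessing is paid for.

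It remains to bound the cost of the $\Oh(q)$ calls to $\ShortLCE_t$. Since $t=\Theta(\log^6 n)$, write $t=4^k$ with $k=\Theta(\log\log n)$ (rounding $t$ to the nearest power of $4$, which does not change the asymptotics of the threshold). By \cref{lem:short2}, a batch of $q'=\Oh(q)$ queries $\ShortLCE_{4^{k}}$ — all at the same level $k=\Theta(\log\log n)$ — runs in time $\Oh(n\log^* n + s)$ where $s=\sum_{i=1}^{q'}(k_i+1)=\Oh(q'\cdot k)=\Oh(q\log\log n)$. Hence the total is $\Oh(q\log\log n + n\log^* n)$. Adding the $\Oh(n\log^* n)$ preprocessing from \cref{lem:fastercoarse} and the $\Oh(n+q)$ overhead from \cref{lem:generic} leaves the bound unchanged: $\Oh(q\log\log n + n\log^* n)$.

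For the symbol-comparison count, I would argue as in the proof of the earlier $\Oh(n)$-query theorem. The only place symbols are actually compared is in the base case $k'=0$ of \cref{alg:spec_lce_4k} (the naive test $w[i]=w[j]$), and in the $\ShortLCE_{4^0}$ comparisons inside $\CoarseLCE$-block sorting. As observed there, one introduces an auxiliary Union-Find structure over positions maintaining the relation ``already known to hold the same letter''; a comparison $w[i]\stackrel?= w[j]$ is only performed when $\Find(i)\ne\Find(j)$, and whenever it reports equality we immediately $\Union(i,j)$. Thus each *equality-returning* comparison is charged to a $\Union$, of which there are $\Oh(n)$, and each *mismatch* terminates a query branch, so is charged to one of the $\Oh(q+n)$ query calls. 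This yields $\Oh(q+n)$ symbol comparisons in total, as in Lemma~7 of \cite{DBLP:conf/stacs/Kosolobov15}.

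The main obstacle — and the only genuinely delicate point — is verifying that plugging $q'=\Oh(q)$ queries all at the single level $k=\Theta(\log\log n)$ into \cref{lem:short2} really produces $s=\Oh(q\log\log n)$ rather than something larger: one must be sure that the per-query contribution to $s$ is $\Oh(k_i+1)=\Oh(\log\log n)$ regardless of the answer, which is precisely what \cref{lem:short2} asserts, and that the $\alpha$ factor is absorbed via \cref{lem:ackermann} into the stated $\Oh(n\log^* n + s)$ form. Everything else is bookkeeping: checking the threshold $t=\Theta(\log^6 n)$ meets the $\Omega(\log^6 n)$ requirement of \cref{lem:fastercoarse}, and that \cref{lem:generic} only blows up the query count by a constant factor so the level $k$ stays $\Theta(\log\log n)$.
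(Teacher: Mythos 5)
Your proposal is correct and follows the same route the paper sketches: the paper's proof is the single sentence ``By combining \cref{lem:fastercoarse,lem:short2}, we obtain the final theorem,'' and you fill in exactly the intended details --- threshold $t=\Theta(\log^6 n)$ so \cref{lem:fastercoarse} gives $\Oh(n\log^* n)$ preprocessing with $\Oh(1)$-time $\CoarseLCE_t$; \cref{lem:generic} to reduce to $\Oh(q)$ queries of the two restricted types; \cref{lem:short2} with all $k_i=\Theta(\log\log n)$ giving $s=\Oh(q\log\log n)$ and thus $\Oh(q\log\log n+n\log^*n)$; and the auxiliary ``same-letter'' Union-Find (\`a la Kosolobov's Lemma~7) for the $\Oh(q+n)$ comparison bound, just as the paper invokes in the earlier $\Oh(n)$-query theorem.
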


\section{Final remarks}
We gave an $\Oh(n\log\log n)$-time algorithm for answering on-line $\Oh(n)$ LCE queries
for general ordered alphabet.
It is known (see~\cite{Kosolobov2016241}) that the runs of the string can be computed
in $\Oh(T(n))$ time, where $T(n)$ is the time to execute on-line $\Oh(n)$ LCE queries. Hence our algorithm implies the following result:
\begin{corollary}
The runs of a string over general ordered alphabet can
be computed in $\Oh(n\log \log n)$ time.
\end{corollary}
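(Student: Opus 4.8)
The plan is to obtain the corollary as a black-box consequence of \cref{thm:final2} together with the known reduction from computing runs to answering a batch of online $\LCE$ queries. Recall from the discussion in \cref{sec:intro} and from~\cite{Kosolobov2016241} that all runs of a word $w$ of length $n$ can be produced by an algorithm that, apart from the $\LCE$ queries themselves, does only $\Oh(n)$ additional work and performs $\Oh(n)$ symbol comparisons, while it issues $\Oh(n)$ $\LCE$ queries on $w$ and on the reversed word $w^R$ (the left extension of a candidate periodicity is a backward $\LCE$, i.e.\ a forward $\LCE$ on $w^R$). Crucially, these queries are issued \emph{online}: which extension the runs algorithm probes next is determined by the answers received so far. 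Hence the whole computation takes $\Oh(n+T(n))$ time, where $T(n)$ is the time to answer $\Oh(n)$ online $\LCE$ queries on a string of length $n$, counting a single pair of $\LCE$ structures, one built on $w$ and one on $w^R$, each of length $n$.

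I would then instantiate $T(n)$ using \cref{thm:final2} with $q=\Oh(n)$: a sequence of $\Oh(n)$ $\LCE$ queries over a general ordered alphabet is answered online in $\Oh(n\log\log n+n\log^*n)$ time making $\Oh(n)$ symbol comparisons. Since $\log^*n=\Oh(\log\log n)$, this is $\Oh(n\log\log n)$. Substituting back into the reduction yields total running time $\Oh(n+n\log\log n)=\Oh(n\log\log n)$, which is exactly the claimed bound. As a side remark, because the reduction of~\cite{Kosolobov2016241} preserves the $\Oh(n)$-symbol-comparison budget of the $\LCE$ data structure, the resulting runs algorithm also makes only $\Oh(n)$ symbol comparisons, so the corollary holds even in the model in which symbol comparisons are the sole charged operation.

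The only thing that genuinely needs care — and hence the ``main obstacle'', such as it is — is checking that the hypotheses of the reduction line up with the guarantee of \cref{thm:final2}: namely (i) that the runs algorithm probes only $\Oh(n)$ extensions in total, which is where one invokes the structural characterization of runs (via Lyndon words and the Main–Lorentz-type anchoring underlying~\cite{Kosolobov2016241,B14,B14bis}) to bound the number of $\LCE$ queries, and (ii) that these queries are genuinely online and refer only to the two fixed strings $w$ and $w^R$, so that a single pair of $\LCE$ structures, preprocessed once, serves all of them. Neither point requires any change to the machinery of the previous sections, so the corollary follows by plugging \cref{thm:final2} into this reduction.
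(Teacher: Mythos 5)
Your proposal is correct and follows exactly the paper's own route: invoke the known reduction (Kosolobov~\cite{Kosolobov2016241}) from computing runs to answering $\Oh(n)$ online $\LCE$ queries, and instantiate it with \cref{thm:final2} at $q=\Oh(n)$, using $\log^* n = \Oh(\log\log n)$ to get the stated bound. The extra details you supply (queries on $w^R$, preservation of the $\Oh(n)$ comparison budget, the online nature of the queries) are consistent with what the paper treats as black-box properties of the reduction.
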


Our algorithm is a major step towards a positive answer for a question posed by Kosolobov~\cite{Kosolobov2016241}, who asked if $\Oh(n)$ time algorithm is possible.

It is also natural to consider general unordered alphabets, that is, strings where
the only allowed operation is checking equality of two characters. 
\begin{theorem}\label{thm:general}
A sequence of $q$ LCE queries for a
string over a general unordered alphabet can be executed in $\Oh(q\log n+ n\log^* n)$ time making $O(n+q)$ symbol equality-tests.
\end{theorem}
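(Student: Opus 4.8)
The plan is to sidestep $\CoarseLCE_t$ queries altogether. The only place where the earlier development uses the alphabet order is the lexicographic sorting of $t$-blocks inside \cref{lem:coarse,lem:fastercoarse}; everything feeding into \cref{lem:short2} --- the monotone family of covers of \cite{ourWADS}, \cref{alg:spec_lce_4k}, and \cref{alg:short_lce_4k} --- accesses $w$ only through symbol \emph{equality} tests $w[a]\stackrel{?}{=}w[b]$, so it already works over unordered alphabets. Thus I would build the monotone family of covers and the per-level Union--Find structures in $\Oh(n)$ time exactly as in \cref{lem:short2}, and then answer each query $\LCE(i,j)$ by a single call $\ShortLCE_{4^{k}}(i,j)$ with $k=\ceil{\log_4 n}$. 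This is correct because $\LCE(i,j)\le n\le 4^k$ for every pair of positions, hence $\ShortLCE_{4^k}(i,j)=\min(\LCE(i,j),4^k)=\LCE(i,j)$.

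For the running time I would simply invoke \cref{lem:short2} with $k_i=\ceil{\log_4 n}=\Theta(\log n)$ for all $i$: the quantity $s=\sum_{i=1}^q(k_i+1)$ is $\Theta(q\log n)$, so the total time is $\Oh((n+s)\cdot\alpha(n+s,n))$, which \cref{lem:ackermann} turns into $\Oh(s+n\log^*n)=\Oh(q\log n+n\log^*n)$, as claimed. No $\CoarseLCE$ preprocessing (and hence no sorting) is needed; the price for avoiding sorting is that each query now costs $\Oh(\log n)$ rather than $\Oh(\log\log n)$.

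The one point that needs a separate argument is the $\Oh(n+q)$ bound on the number of symbol comparisons --- the running-time analysis only gives $\Oh(q\log n+n)$. I would argue directly about \cref{alg:spec_lce_4k}: a symbol comparison happens only at the base case $\SpecLCE_{4^0}(a,b)$, and only when $\Find_0(a)\ne\Find_0(b)$. A comparison returning ``equal'' is immediately followed by $\Union_0(a,b)$, a genuine merge of two classes of the level-$0$ partition of $[1..n]$, so at most $n-1$ such comparisons occur over the whole run. A comparison returning ``not equal'' sits at the bottom of a chain of mismatch-returning calls $\SpecLCE_{4^{k'}}\!\to\!\SpecLCE_{4^{k'-1}}\!\to\!\dots\!\to\!\SpecLCE_{4^0}$, because \cref{alg:spec_lce_4k} \textbf{break}s on the first mismatching child, so each mismatching call has exactly one mismatching child; and a single top-level invocation of $\ShortLCE_{4^k}$ (\cref{alg:short_lce_4k}) spawns at most two such chains --- at most one inside the \textbf{while}-loops, since the procedure returns the moment some $\SpecLCE_{4^{k'}}$ call reports a value below $4^{k'}$, plus possibly the concluding $\SpecLCE_{4^k}$ call. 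Hence at most $2q$ comparisons return ``not equal'', and the total is $\Oh(n+q)$. I expect this comparison-counting step to be the only real obstacle: one has to notice that, in contrast to the time bound, the number of mismatch-reporting base-case comparisons is governed by $q$ (each query ``bubbles up'' $\Oh(1)$ mismatches through the recursion) rather than by $\sum_i k_i$.
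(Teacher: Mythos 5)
Your proposal is correct and takes essentially the same route as the paper: apply the $\ShortLCE_{4^k}$ machinery of \cref{sec:faster} with $4^k\ge n$ (the paper's choice $k=\lceil\tfrac12\log n\rceil$ is the same as your $\lceil\log_4 n\rceil$), plug into \cref{lem:short2,lem:ackermann}, and observe that this pipeline uses only equality tests, so no ordering and no $\CoarseLCE$ sorting are needed. The paper's proof is a two-sentence remark that leaves the $\Oh(n+q)$ comparison bound implicit; your chain-of-mismatches accounting (each ``equal'' comparison in $\SpecLCE_{4^0}$ triggers a $\Union_0$, so at most $n-1$ of them, and each $\ShortLCE_{4^k}$ invocation spawns at most two mismatch chains, each ending in a single ``not-equal'' comparison) correctly supplies the missing detail.
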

\begin{proof}
We can use the faster $\ShortLCE_{4^k}$ algorithm described in~\cref{sec:faster} with $k = \lceil\frac12 \log n\rceil$.
Observe that in this approach  we did not use the order of the characters, and thus it still works for unordered alphabets.
\end{proof}

Note that for unordered alphabets the reduction by Kosolobov~\cite{Kosolobov2016241} (see also~\cite{B14bis}) from computing runs to LCE queries no longer works.
Actually, deciding whether a given string is square-free already requires $\Omega(n\log n)$ comparisons,
as shown by Main and Lorentz~\cite{DBLP:journals/jal/MainL84}.
On the other hand for $\Oh(n)$ $\LCE$ queries $\Oh(n)$ equality tests always suffice.

\bibliographystyle{plainurl}
\bibliography{lce_queries}

\end{document}